\documentclass[letterpaper]{article} 
\usepackage{aaai25}  
\usepackage{times}  
\usepackage{helvet}  
\usepackage{courier}  
\usepackage[hyphens]{url}  
\usepackage{graphicx} 
\urlstyle{rm} 
\usepackage{natbib}  
\usepackage{caption} 
\frenchspacing  
\setlength{\pdfpagewidth}{8.5in} 
\setlength{\pdfpageheight}{11in} 
%
\pdfinfo{
/TemplateVersion (2025.1)
}
%

\usepackage{color}
\usepackage{comment}
\usepackage{amsmath, amsthm, amssymb}
\usepackage{physics}
\usepackage{multicol}
\usepackage{multirow}
\usepackage{algorithm2e}
\usepackage{enumerate}
\usepackage{booktabs}
\RestyleAlgo{ruled}
\usepackage{float}
\usepackage{tikz}
\usepackage{mathtools}
\theoremstyle{definition}
\newtheorem{thm}{Theorem}
\newtheorem*{thm*}{Theorem}
\newtheorem{lem}{Lemma}
\newtheorem*{lem*}{Lemma}
\newtheorem{dfn}{Definition}
\newtheorem{prop}{Proposition}
\newcommand{\mathtrue}{\text{\textit{true}}}
\newcommand{\mathfalse}{\text{\textit{false}}}

%
\usepackage{newfloat}
\usepackage{listings}
\DeclareCaptionStyle{ruled}{labelfont=normalfont,labelsep=colon,strut=off} 
\lstset{%
	basicstyle={\footnotesize\ttfamily},
	numbers=left,numberstyle=\footnotesize,xleftmargin=2em,
	aboveskip=0pt,belowskip=0pt,%
	showstringspaces=false,tabsize=2,breaklines=true}
\floatstyle{ruled}
\newfloat{listing}{tb}{lst}{}
\floatname{listing}{Listing}

\setcounter{secnumdepth}{0} 

%


\title{Tensor Decomposition Meets Knowledge Compilation: \\A Study Comparing Tensor Trains with OBDDs}
\author{
    Ryoma Onaka, Kengo Nakamura, Masaaki Nishino, Norihito Yasuda
}
\affiliations{
  NTT Communication Science Laboratories, NTT Corporation,
Kyoto, Japan\\
    \{ryoma.onaka,kengo.nakamura,masaaki.nishino,norihito.yasuda\}@ntt.com
}

\begin{document}

\maketitle


\begin{abstract}
A knowledge compilation map analyzes tractable operations in Boolean function representations and compares their succinctness.
This enables the selection of appropriate representations for different applications.
In the knowledge compilation map, all representation classes are subsets of the negation normal form (NNF). 
However, Boolean functions may be better expressed by a representation that is different from that of the NNF subsets.
In this study, we treat tensor trains as Boolean function representations and analyze their succinctness and tractability. Our study is the first to evaluate the expressiveness of a tensor decomposition method using criteria from knowledge compilation literature. 
Our main results demonstrate that tensor trains are more succinct than ordered binary decision diagrams (OBDDs) and support the same polytime operations as OBDDs. 
Our study broadens their application by providing a theoretical link between tensor decomposition and existing NNF subsets.
\end{abstract}

\section{Introduction}

Knowledge compilation, a technique for handling propositional reasoning at high-speed, has been studied for decades.
The logical formula in propositional logic can be converted into a representation by pre-computation, allowing various queries to be quickly answered on such converted representations.
Typical target representations for knowledge compilation include \emph{Ordered Binary Decision Diagrams (OBDDs)}~\cite{bryant1986graph}, \emph{deterministic Decomposable Negation Normal Form (d-DNNF)}~\cite{Darwiche2001ddnnf}, and \emph{Sentential Decision Diagrams (SDDs)}~\cite{darwiche2011sdd}.
\emph{A Negation Normal Form (NNF)}~\cite{darwiche-dnnf2001} is a generalization of all these approaches and Boolean function representation classes can be considered as adding constraints, such as determinism and structured decomposability, to the NNF.
However, the possibility remains that Boolean functions can be represented more succinctly by focusing on representations other than NNF.

Tensor decomposition represents a tensor as the sum of the products of smaller tensors.
Tensor decomposition resembles knowledge compilation because it succinctly represents a large tensor with a set of smaller tensors.
A tensor train decomposes a high-dimensional tensor into a linear product of lower-dimensional tensors~\cite{oseledets2011tensor}.
Many operations can be performed on a tensor train, such as sum, inner product, and Hadamard product.

As the truth table of an $n$-ary Boolean function can naturally be regarded as an $n$-dimensional binary tensor, an exact tensor decomposition representing such a binary tensor can be regarded as a target representation for knowledge compilation.
Our study measures the performance of a tensor train as a representation of a Boolean function.
For comparison with existing Boolean function representations, we follow the approach of the knowledge compilation map~\cite{knowledge}, which clarifies the succinctness between Boolean function representations and the operations that can be performed on each representation in polytime.
In this study, we analyze the succinctness between OBDDs and tensor trains.
In addition to the popularity of OBDDs, considering that a linear ordering of variables can be introduced into tensor trains such as OBDDs, we argue that a comparison between tensor trains and OBDDs would be beneficial to the community.

The main result of this study is twofold.
First, we show that tensor trains (deemed to be Boolean function representations) are more succinct than OBDDs; that is, 
they can represent any function with at most a polynomial times larger size than OBDDs and a function that cannot be represented by OBDDs with a polynomial times larger size than tensor trains exists.
Second, we prove that tensor trains can perform all operations on the knowledge compilation map~\cite{knowledge} that OBDDs can perform in polynomial time (polytime) with respect to the size of the representation.
These results are interesting because there is usually a trade-off between succinctness and the abundance of operations that can be performed in polytime, as suggested by knowledge compilation maps.
Moreover, 
because OBDDs are among the most powerful representations of many NNF subsets in terms of the number of operations supported in polytime, it is surprising that the tractability of operations of tensor trains and OBDDs are identical. 
It is also remarkable that tensor trains emerged from a different field than NNFs; to the best of our knowledge, this is the first study that argues that tensor decomposition provides more succinctness and tractability in representing a Boolean function than NNF subsets.

\section{Related Works}
Since the publication of an innovative paper~\cite{knowledge}, the knowledge compilation map has been extended when new knowledge representation languages appear~\cite{kcmap_extension1,darwiche_2014}.

The knowledge compilation map has contributed to many varied applications and has also been extended beyond Boolean functions~\cite{FargierMN13,choi2020probabilistic}. However, tensor decomposition-based methods have not been evaluated with these maps.
Knowledge compilation maps, which play an important role in understanding knowledge representation languages, have fueled the emergence of several important applications of knowledge compilation techniques~\cite{CHAVIRA2008772,NEURIPS2018_dc5d637e,pmlr-v80-xu18h}.
An approach similar to tensor decomposition is expected to contribute to numerous applications.

Tensor decomposition~\cite{kolda09} is a method to represent a large-scale high-dimensional tensor as the sum of the products of smaller or low-dimensional tensors.
High-dimensional tensors are frequently used in research areas such as machine learning, numerical calculations, and quantum physics. Because an $m$-dimensional tensor has an exponential number of elements with respect to $m$, tensor decomposition methods are indispensable for addressing high-dimensional tensors.
Canonical Polyadic (CP) decomposition~\cite{carroll70} and Tucker decomposition~\cite{tucker66} are the initial examples; other variants have been proposed, including hierarchical Tucker decomposition~\cite{hackbusck09}, which is more succinct than CP decomposition~\cite{khrulkov2018expressive}.
However, these decompositions do not place much emphasis on operations.
As a tensor decomposition that supports several operations, a tensor train~\cite{oseledets2011tensor} was proposed, which decomposes an $m$-dimensional tensor into the product of $m$ 3-dimensional tensors.
Tensor trains were originally invented in the quantum physics community as matrix product states~\cite{fannes92}. Several tensor operations for tensor trains were developed in a previous study whose runtimes are polynomial with the decomposition size\cite{oseledets2011tensor}.

Two other studies bridge Boolean function representations and tensor decompositions: Tensor Decision Diagrams \cite{hong2022tensor} and {Quantum Multiple-valued Decision Diagrams} (QMDD) \cite{niemann2015qmdds}.
Both represent tensors as decision graphs, similar to OBDDs.
However, these studies addressed the representation of real-valued tensors instead of Boolean functions.
Although a method for model counting using tensor decomposition~\cite{dudek2020parallel} exists, this method does not introduce tensor decomposition into the knowledge compilation or analyze its succinctness or tractability.

Our study focuses on the use of tensor decomposition methods as representations of Boolean functions and analyzes their succinctness and tractable Boolean function operations.
To the best of our knowledge, no similar studies have been previously conducted.

\section{Preliminaries}

We briefly introduce the OBDD, tensor train, and knowledge compilation map.
For convenience, we describe the values that a binary variable can assume as $0$ and $1$, which corresponds to $\mathfalse$ and $\mathtrue$ in the ordinary Boolean expression.
Within such an expression, for binary values $a$ and $b$, the conjunction $a\wedge b$, disjunction $a\vee b$, and negation $\neg a$ can be computed by $\min\{a,b\}$, $\max\{a,b\}$, and $1-a$, respectively.

An \emph{$n$-ary Boolean function} considers $n$ binary variables as inputs and returns a binary output.
The set of assignments of the input variables such that the output of the Boolean function is $1$ is known as a \emph{model}.
A Boolean function is \emph{consistent} when it includes a model.
We introduce operations that transform one Boolean function into another.
\begin{dfn}~\cite{darwiche1999compiling}
Let $f$ be a Boolean function and
let $\gamma$ be a consistent term. The conditioning of $f$ on $\gamma$, noted $f |\gamma$, is the Boolean function obtained by assigning each variable $x$ of $f$ by 1 if $x$ is a positive literal of $\gamma$ and by 0 if $x$ is a negative literal of $\gamma$.
\end{dfn}
\begin{dfn}~\cite{knowledge}
Let $f$ be a Boolean function and let $\mathbf{X}$ be a subset of variables.
The \emph{forgetting} of $\mathbf{X}$ from $f$, denoted $\exists \mathbf{X}.f$, is a Boolean function that satisfies $\exists\mathbf{X}.f|x=\exists\mathbf{X}.f|\neg x$
for any $x\in \mathbf{X}$ and $f \models g\Leftrightarrow\exists\mathbf{X}.f \models g$ for any Boolean function $g$ that does not include any variable from $\mathbf{X}$. Here, $f\models g$ implies that $f$ implies $g$; that is, $g(x)=f(x)\wedge g(x)$ for any $x$.
\end{dfn}

\paragraph{OBDD:}
An OBDD is a data structure that represents a Boolean function with a rooted directed acyclic graph, $B=(N, A)$, where $N$ denotes a node set and $A$ denotes a set of oriented edges.
Its structure is as follows. 
We denote the root of the graph by $R \in N$. 
Node set $N$ comprises two terminal $\{\bot, \top\}$ and non-terminal nodes. 
From each non-terminal node $v\in N\setminus \{\bot, \top\}$, exactly two oriented edges exist: \textsf{LO} and \textsf{HI}. We denote the nodes to which these edges indicate as $\textsf{lo}(v)$ and $\textsf{hi}(v)$. 
$v\in N\setminus \{\bot, \top\}$ is given a corresponding label for just one of the input variables, denoted by $\textsf{var}(v)$. 
Each directed edge must have a direction that is consistent with the determined order of the input variables. 
This structure is known as a \emph{variable order}. 
For each node $v\in N$ of the OBDD, the corresponding Boolean function $f_v$ is defined as follows: 
\begin{enumerate}[(i)]
  \item If $v=\bot$, then $f_v=0$ for any assignment of input.
  \item If $v=\top$, then $f_v=1$ for any assignment of input.
  \item If $v\in N\setminus \{\bot, \top\}$, then $f_v=(\lnot \textsf{var}(v)\wedge f_{\textsf{lo}(v)})\vee (\textsf{var}(v)\wedge f_{\textsf{hi}(v)})$.
\end{enumerate}
We denote the function corresponding to the OBDD with R as the root using $f_R$.
The \emph{size} of an OBDD comprises the number of its nodes. 
When considering the operations on two OBDDs, those that can be performed depend on whether their variable orders are identical.
OBDD$_<$ is defined to distinguish between the two.
\begin{dfn}
For the total order $<$ among the input variables, OBDD$_<$ is a subset of the OBDD satisfying the following condition: if node $v_1$ is an ancestor of $v_2$, then $\textsf{\upshape var}(v_1)<\textsf{\upshape var}(v_2)$.
\end{dfn}
Figure~\ref{encode} shows an example of an OBDD with variable order $x_1 < x_2 < x_3$ representing Boolean function $f = (\neg x_1 \wedge \neg x_2) \vee (\neg x_1  \wedge \neg x_3) \vee (\neg x_2 \wedge \neg x_3)$.
Terminal nodes are depicted as squares, whereas internal nodes are represented by circles.
The $\textsf{LO}$ and $\textsf{HI}$ edges are indicated by the dashed and solid arrows, respectively.

\begin{figure*}[t]
  \centering
\includegraphics[width=1.6\columnwidth]{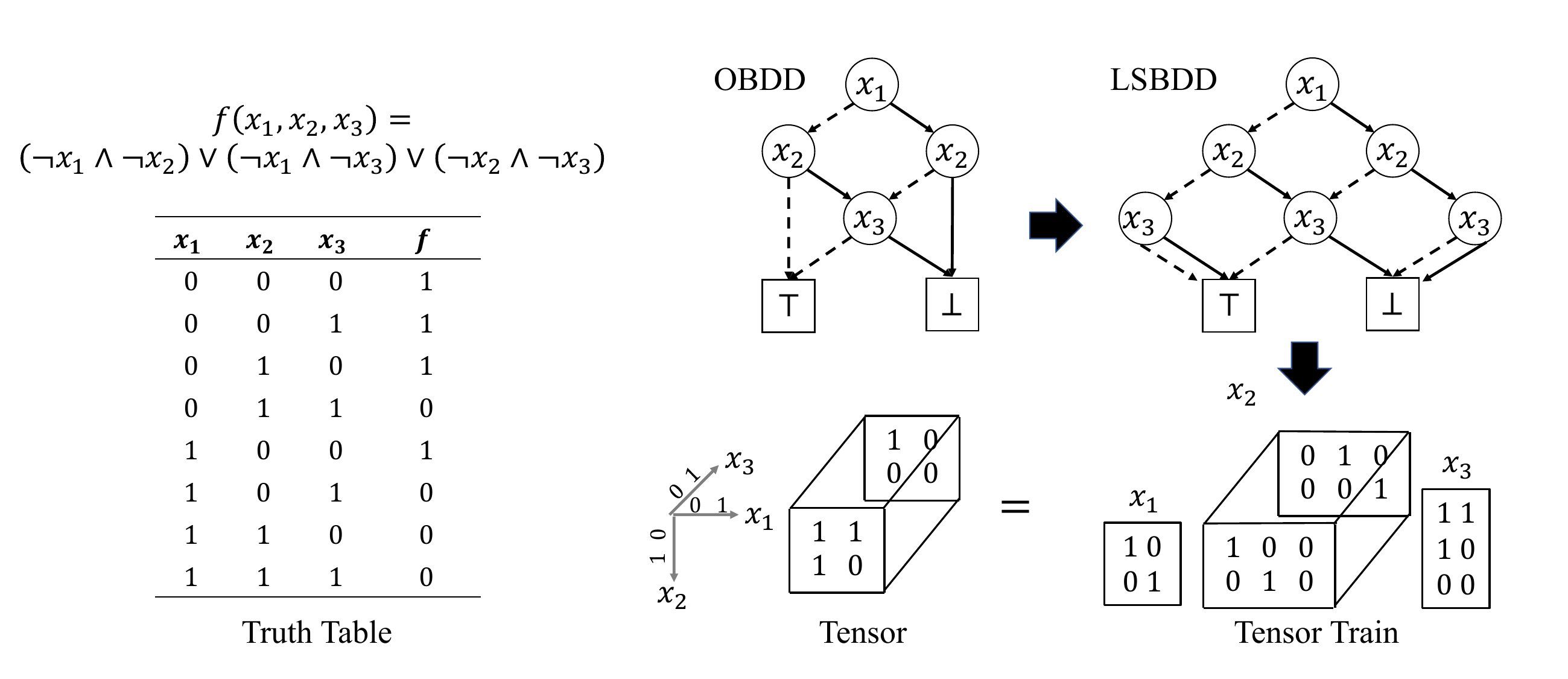} 
  \caption{Example of transforming an OBDD into an equivalent tensor train. We first transform an OBDD into an equivalent level-wise smooth OBDD (LSBDD) and then construct a tensor train from the LSBDD. The OBDD and the LSBDD follow the variable order $x_1 < x_2 < x_3$ and the $\pi$ of the tensor train is defined as $\pi(i) = i$, which follows the order.}
  \label{encode}
\end{figure*}

\paragraph{Tensor train:}
A tensor is an $m$-dimensional array {with $d_1\times\cdots\times d_m$ elements}. 
Each dimension of a tensor is known as a \emph{mode}. 
We denote the element of tensor $\mathcal{X}$ that corresponds to $(a_1, \dots, a_m)$ by $\mathcal{X}(a_1, \dots, a_m)$, where $a_j \in \{0, \ldots, d_j -1\}$ for $1\leq j\leq m$.
For an $m$-dimensional tensor $\mathcal{X}$ of size $d_1\times\cdots\times d_m$ and an $\ell$-dimensional tensor $\mathcal{Y}$ of size $d^\prime_1\times\cdots\times d^\prime_\ell$, where $d_m=d^\prime_1$, \emph{mode product} $\mathcal{X}\mathcal{Y}$ satisfies
\begin{align*}
& \mathcal{X}\mathcal{Y}(a_1,\ldots,a_{m-1},a^\prime_2,\ldots,a^\prime_\ell) \\
& = \sum_{i=0}^{d_m-1}\mathcal{X}(a_1,\ldots,a_{m-1},i)\mathcal{Y}(i,a^\prime_2,\ldots,a^\prime_\ell)
\end{align*}
for any $a_1, \ldots, a_{m-1}, a^\prime_2, \ldots, a^\prime_{\ell}$.
This mode product $\mathcal{X}\mathcal{Y}$ result in an $(m+\ell-2)$-dimensional tensor of size $d_1\times\cdots\times d_{m-1}\times d^\prime_2\times\cdots\times d^\prime_\ell$.
Note that this is a generalization of the matrix product to the tensors.

\emph{Tensor train}~\cite{oseledets2011tensor} represents $d_1\times\cdots\times d_m$ tensor $\mathcal{T}$ using $r_i\times d_i\times r_{i+1}$ tensors $\mathcal{A}_i$ as $\mathcal{T}=\mathcal{A}_1\mathcal{A}_2\cdots \mathcal{A}_m$,
where $r_1=r_{m+1}=1$. 
$r=\max r_i$ is known as the \emph{rank} of the tensor train.
The size of the tensor train representation is the sum of the sizes of $\mathcal{A}_i$.
The size of $\mathcal{A}_i$ is $r_id_ir_{i+1}$, the number of the elements.
Because $r_id_ir_{i+1}\leq r^2d_i$, the size of the tensor train representation is bounded by $r^2\sum_id_i$.
In the following, we focus on $\overbrace{2\times\cdots\times 2}^m$ tensors and denote them as $2^{\times m}$ tensors.
For such tensors, $\sum_id_i=\order{m}$; thus, the number of elements in the tensor train representation is at most $\order{mr^2}$.

In the following, we explain the use of tensor trains to represent Boolean functions.
The following definition implies that a Boolean function can be represented by a tensor while skipping certain variables.

\begin{dfn}
\label{dfn:boolean-function-tensor}
Let $\mathcal{T}$ be a $2^{\times m}$ binary tensor with $1 \leq m \leq n$ and $\pi:[m]  \to [n]$ be a mapping, where $[n] = \{1, \ldots, n\}$. A pair $(\mathcal{T}, \pi)$ is a tensor representation of an $n$-ary Boolean function representing an $n$-ary Boolean function $f$ whose output equals $\mathcal{T}(a_1, \ldots, a_m)$ for the input assignments of variables $x_1, \ldots, x_n$ satisfying $x_j = a_{\pi^{-1}(j)}$ if $j \in Q$, where  $(a_1, \ldots, a_m) \in \{0, 1\}^m$ and $Q = \{\pi(j) \mid j \in [m]\}$.
\end{dfn}

For instance, $f(x_1, x_2) =  x_1\vee \neg x_2$ can be represented by pair $(\mathcal{T}, \pi)$, where $\mathcal{T}$ is $2\times 2$ tensor (= matrix)
\begin{align}
\label{eq:mat-example}
\mathcal{T} = 
\begin{pmatrix}
  1 & 0 \\
  1 & 1 \\
  \end{pmatrix}, 
  \end{align}
  and $\pi(1) = 1$ and $\pi(2) = 2$.
Similarly, $f(x_1, x_2) = \neg x_2$ can be represented as a pair of $\mathcal{T} = (1, 0)^\top$ and $\pi(1) = 2$.

Now, we can assume that a tensor train is a succinct representation of a Boolean function:
\begin{dfn}
Let $(\mathcal{T}, \pi)$ be a tensor representation of an $n$-ary Boolean function $f$, where $\mathcal{T}$ is a $2^{\times m}$ binary tensor. A pair of tensor sequences, $\mathcal{A}_1, \ldots, \mathcal{A}_m$ and $\pi$, is a tensor train representation of $f$, where $\mathcal{A}_i$ for $i \in \{1, \ldots, m\}$ are 3-dimensional ternary tensors; that is, every element of $\mathcal{A}_i$ is in $\{-1,0,1\}$ and satisfies $\mathcal{T} = \mathcal{A}_1 \cdots \mathcal{A}_m$, i.e.,  $\mathcal{A}_1 \cdots \mathcal{A}_m$ is a tensor train decomposition of $\mathcal{T}$.
\end{dfn}
For instance, binary tensor $\mathcal{T}$ in Eq.~\eqref{eq:mat-example} can be represented as a tensor train $\mathcal{T} = \mathcal{A}_1 \mathcal{A}_2$, where $\mathcal{A}_1$ and $\mathcal{A}_2$ are $1 \times 2 \times 2$ and $2 \times 2 \times 1$ tensors defined as:
\begin{align*}
\mathcal{A}_1 = \begin{pmatrix}
  1 & 0 \\
  0 & 1 \\
  \end{pmatrix},~~~
  \mathcal{A}_2 = \begin{pmatrix}
  1 & 0 \\
  1 & 1 \\
  \end{pmatrix}.
\end{align*}

The requirement that $\mathcal{A}_1, \ldots, \mathcal{A}_m$ are ternary tensors ensures that a constant number of bits are required to represent each element of $\mathcal{A}_i$. 
We use this fact to compare the succinctness of the OBDD$_<$ and tensor trains in later sections. 
As shown below, this requirement does not change the set of polytime operations supported by tensor trains.
We use notation $\mathcal{A}_i(\cdot, b, \cdot)$ for representing the slice of the $r \times d \times q$ tensor with $b \in \{0, \ldots, d-1\}$, which corresponds to the $r\times q$ matrix.
Given tensor train $\mathcal{T} = \mathcal{A}_1 \cdots \mathcal{A}_m$, we can represent $\mathcal{T}(a_1, \ldots, a_m)$ using these matrices as follows:
\begin{align}
\label{eq:tensor-train-products}
\mathcal{T}(a_1, \ldots, a_m) = \mathcal{A}_1(\cdot, a_1, \cdot) \cdots \mathcal{A}_m(\cdot, a_m, \cdot)\,.
\end{align}

We say a Boolean function is represented in TT language if it is represented as a pair of tensor trains $\mathcal{A}_1, \ldots, \mathcal{A}_m$ and mapping $\pi$ defined above. Moreover, given a total order $<$ over the input variables $x_1,\ldots,x_n$ of an $n$-ary Boolean function $f$, we define a subclass of TT following $<$.
\begin{dfn}

For a total order $<$ among input variables $x_1, \ldots, x_n$, TT$_<$ is a subclass of TT where mapping $\pi$ satisfies $x_{\pi(i)} < x_{\pi(j)}$ for all $1 \leq i < j \leq m$. We say such mapping $\pi$ {\em follows} variable order $<$.
\end{dfn}

\paragraph{Knowledge compilation map:}
The knowledge compilation map~\cite{knowledge} is an attempt to compare different target representations of knowledge compilation. Since the publication of a seminal paper ~\cite{knowledge}, researchers have been evaluating representations of Boolean functions based on two key dimensions introduced in the paper: the succinctness of the target compilation language and the class of queries and transformations supported by representations in polytime.

Succinctness evaluates the space efficiency of the representations. Because the existing representations for Boolean functions in the literature are all NNF subsets, succinctness has been evaluated by the number of edges in an NNF. 
This assumption is justified by considering the space required for both representations.
The space required for tensor train representation is $\mathcal{O}(\sum_{i=1}^{m}|\mathcal{A}_i|+m\log n)$ bits 
if we limit the possible values of elements of $\mathcal{A}_i$ to $\{-1, 0, 1\}$. Here, $m\leq\sum_{i=1}^{m}|\mathcal{A}_i|$ holds because $|\mathcal{A}_i|>0$ for every $i$.
In contrast, OBDD $B$ needs at least $\Omega(|B| \log{n})$ bits to represent a non-terminal node because every node is associated with a variable. 
Therefore, to retain the tensor train representation of size $S(\coloneqq\sum_{i}|\mathcal{A}_i|)$, we only need the most constant factor of space required for the OBDD representation of the same size $S(\coloneqq |B|)$.

Queries and transformations are operations related to Boolean functions. Although queries do not change the Boolean functions, a transformation returns a modified Boolean function as its output. In knowledge compilation literature, representations are evaluated based on whether they support polytime operations over a set of queries and transformations.
We use the following lemma to determine whether a tensor train supports an operation in polytime:
\begin{lem}
\label{lem:tt-polytime-criteria}
The size of the tensor train representation, $\sum_{i=1}^{m}|\mathcal{A}_i|$, of a Boolean function can be lower bounded by $r+m$, where $r$ denotes the rank and $m$ denotes the number of modes.
\end{lem}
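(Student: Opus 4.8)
The plan is to unfold the definition of the representation size and bound it core by core. Since we represent a Boolean function, every mode satisfies $d_i = 2$, so each core has size $|\mathcal{A}_i| = r_i d_i r_{i+1} = 2 r_i r_{i+1}$, and the total size is $\sum_{i=1}^m |\mathcal{A}_i| = \sum_{i=1}^m 2 r_i r_{i+1}$. Because every rank satisfies $r_i \geq 1$, each individual core already contributes at least $|\mathcal{A}_i| \geq 2$ to this sum, which will account for the $m$ part of the target bound.

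To capture the contribution of the rank $r = \max_i r_i$, I would locate an index $k$ with $r_k = r$ and argue that the core carrying $r$ as one of its dimensions is large. If $r = 1$, then all ranks equal $1$, so $\sum_{i=1}^m |\mathcal{A}_i| = 2m \geq 1 + m = r + m$ using $m \geq 1$. If instead $r > 1$, then since $r_1 = r_{m+1} = 1$ the maximizing index must be internal, i.e.\ $2 \leq k \leq m$; hence the core $\mathcal{A}_k$ (of shape $r_k \times 2 \times r_{k+1}$) exists and satisfies $|\mathcal{A}_k| = 2 r_k r_{k+1} = 2 r\, r_{k+1} \geq 2r$ because $r_{k+1} \geq 1$.

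It then remains to combine the two observations. Isolating $\mathcal{A}_k$ and bounding each of the other $m-1$ cores below by $2$ gives $\sum_{i=1}^m |\mathcal{A}_i| \geq |\mathcal{A}_k| + \sum_{i \neq k} |\mathcal{A}_i| \geq 2r + 2(m-1) = 2r + 2m - 2$. Since $r \geq 1$ and $m \geq 1$ imply $r + m \geq 2$, we have $2r + 2m - 2 - (r+m) = r + m - 2 \geq 0$, and therefore $\sum_{i=1}^m |\mathcal{A}_i| \geq r + m$, which is the claimed bound.

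I do not expect a genuine obstacle, since this is an elementary counting estimate. The only point requiring care is the bookkeeping about where the maximum rank sits: one must verify that the boundary constraints $r_1 = r_{m+1} = 1$ force the maximizer to an internal index whenever $r > 1$, so that a core of size at least $2r$ is actually guaranteed to exist, and then separately dispose of the degenerate all-ones case $r = 1$ (and note that the smallest instance $m = 1$ attains equality $r + m = 2$).
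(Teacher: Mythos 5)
Your proof is correct and takes essentially the same route as the paper: isolate a core whose dimensions include the maximal rank $r$ (so its size is at least $2r$) and bound every other core below by a constant, then combine. Your write-up is just slightly more explicit about the bookkeeping (the $r=1$ degenerate case and the fact that $r_1=r_{m+1}=1$ forces the maximizer to an internal index), whereas the paper compresses this into the observation that some core has size strictly greater than $r$ and each remaining core has size at least $1$.
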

\begin{proof}
Because $r=\max r_i$, at least one tensor $\mathcal{A}_i$ whose size $2r_ir_{i+1}$ is larger than $r$ exists and the size of every tensor is $|\mathcal{A}_i| > 0$.
Therefore,
$\sum_{i=1}^{m}|\mathcal{A}_i| > r + (m - 1)$ and, thus, $\sum_{i=1}^{m}|\mathcal{A}_i|\geq r+m$.
\end{proof}
According to Lemma~\ref{lem:tt-polytime-criteria}, any polynomial comprising $r$ and $m$ is also a polynomial of the size of the tensor train representation.
Therefore, we prove the polytime complexity of operations on tensor train representations by showing the complexities of the polynomials of $r$ and $m$.

\section{Main Results}

We present the results of evaluating the succinctness and tractability of the operations of tensor trains.
First, we introduce the notion of succinctness to compare the expressive efficiencies of the OBDDs and tensor trains.

\begin{dfn}\cite{knowledge}
For two representation classes ($A$ and $B$) of a Boolean function, $A$ is described as at least as \emph{succinct} as $B$, denoted as $A\leq B$, if polynomial $p$ exists such that, for every representation $\beta\in B$, equivalent representation $\alpha\in A$ where $|\alpha|\leq p(|\beta|)$ exists. 
Here, $|\alpha|$ and $|\beta|$ are the sizes of $\alpha$ and $\beta$, respectively.
Furthermore, representation class $A$ is more \emph{succinct} than $B$ if
it is at least as succinct as $B$, even though $B$ is not as succinct as $A$.
\end{dfn}

The following are the main results of this study. 
\begin{thm}
TT$_<$ is more succinct than OBDD$_<$.
\label{thm_succinct}
\end{thm}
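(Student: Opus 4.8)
The plan is to unpack the definition of ``more succinct'': I must establish both (a) that $\mathrm{TT}_{<}$ is at least as succinct as $\mathrm{OBDD}_{<}$, i.e.\ every $\mathrm{OBDD}_{<}$ of size $s$ admits an equivalent $\mathrm{TT}_{<}$ (with a mapping $\pi$ following the same order $<$) of size polynomial in $s$, and (b) that the converse fails, i.e.\ there is a family of Boolean functions with polynomial-size $\mathrm{TT}_{<}$ representations whose equivalent $\mathrm{OBDD}_{<}$ representations all have superpolynomial size. Part (a) gives $\mathrm{TT}_{<}\le\mathrm{OBDD}_{<}$ and part (b) gives $\mathrm{OBDD}_{<}\not\le\mathrm{TT}_{<}$; together they yield the claim.

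For (a) I would follow the route suggested by Figure~\ref{encode}. First, smooth the OBDD into a \emph{level-wise smooth} diagram (LSBDD) in which every edge connects consecutive levels: each edge that skips variables is replaced by a chain of auxiliary nodes, one per skipped variable, which inflates the node count by at most a factor of $n$ and hence keeps the size polynomial. For the LSBDD, let $r_i$ be the number of nodes on level $i$ and define, for variable $x_{\pi(i)}$ and bit $b\in\{0,1\}$, the $r_i\times r_{i+1}$ matrix $M_i^{b}$ whose $(u,v)$ entry is $1$ exactly when the $b$-edge of node $u$ points to node $v$; set $\mathcal{A}_i(\cdot,b,\cdot)=M_i^{b}$. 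Because the diagram is deterministic, for every assignment there is a unique consistent root-to-leaf path, so by Eq.~\eqref{eq:tensor-train-products} the product $\mathcal{A}_1(\cdot,a_1,\cdot)\cdots\mathcal{A}_m(\cdot,a_m,\cdot)$, pre- and post-multiplied by the indicator row of the root and the indicator column of $\top$, counts that single path and returns the function value in $\{0,1\}$. The entries lie in $\{0,1\}\subset\{-1,0,1\}$, so this is a valid $\mathrm{TT}_{<}$; its rank and mode count are bounded by the LSBDD size, and Lemma~\ref{lem:tt-polytime-criteria} turns this into a polynomial size bound.

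For (b) I would exhibit a separating family, for which the multiplexer (indirect storage access) function is the natural candidate: with $t$ address variables $z_1,\dots,z_t$, $d=2^{t}$ data variables $y_1,\dots,y_d$, and output $y_{1+\mathrm{bin}(z)}$, place the data variables before the address variables in $<$. At the cut just after the data variables the $2^{d}$ data assignments induce $2^{d}$ pairwise distinct subfunctions on the address bits, forcing every $\mathrm{OBDD}_{<}$ to have at least $2^{d}=2^{\Omega(n)}$ nodes. At the same time each value matrix obtained by cutting $<$ has only polynomially many distinct columns (at the main cut the columns are the $d$ coordinate projections of $y$), so its real rank, and hence the minimal TT rank, is $O(d)$. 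I would then build an explicit ternary (indeed $0/1$) $\mathrm{TT}_{<}$ of rank $O(d)$: carry a bond vector of dimension $d+1$ that, while the data bits are read, stores $(y_1,\dots,y_d,1)$ using affine-to-linear augmentation, and then, while the address bits are read from the most significant downward, repeatedly restrict the bond to the still-addressable coordinates until a single entry---the selected data bit---remains. All transition matrices are $0/1$, so this $\mathrm{TT}_{<}$ has polynomial size by Lemma~\ref{lem:tt-polytime-criteria}, while its $\mathrm{OBDD}_{<}$ counterparts are exponential; no polynomial can bridge this gap.

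The routine direction is (a), which is essentially the classical encoding of a layered, deterministic branching program as a matrix-product (transfer-matrix) state. The main obstacle is the ternary-entry restriction in (b): the identity ``minimal TT rank $=$ real rank of the unfolding'' only guarantees a \emph{real} low-rank decomposition, and a $0/1$ matrix of real rank $d$ need not admit any ternary rank-$d$ factorization. Consequently the separation cannot be finished by a rank computation alone; I must hand-construct a concrete $\{-1,0,1\}$-valued tensor train of polynomial rank for the chosen family, which is exactly what the explicit multiplexer encoding above is designed to supply.
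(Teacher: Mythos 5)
Your overall structure (both directions of the succinctness definition, OBDD $\to$ LSBDD $\to$ transfer-matrix encoding for one direction, an explicit separating family for the other) matches the paper, and your part (a) is essentially the paper's proof. One caveat there: you bound the smoothing blow-up by ``a factor of $n$'' and call that polynomial, but the succinctness definition requires the output size to be polynomial in the \emph{OBDD's size} $|B|$, not in $n$; an OBDD can have $O(1)$ nodes over $n$ variables, in which case inserting one chain node per skipped variable is an unbounded blow-up. The paper avoids this by smoothing only with respect to the set $V(B)$ of variables actually occurring in $B$ (Definition~\ref{dfn:lsbdd} and Lemma~\ref{lem:obdd-to-lsbdd}), giving a factor of at most $|V(B)|\le|B|$. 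Your construction survives once you make the same restriction, so this is a fixable slip in the bound rather than a broken argument, but as written the justification does not meet the definition.

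Part (b) is where you genuinely diverge: you use the multiplexer (indirect storage access) function with data variables placed before address variables in $<$, whereas the paper uses HWB$_n$. Your route is sound --- the $2^{d}$ distinct subfunctions at the data/address cut give an elementary, self-contained OBDD$_<$ lower bound, and your bond-dimension-$(d+1)$ construction (accumulate $(y_1,\dots,y_d,1)$, then select coordinates bit by bit) is a valid $\{0,1\}$-entry tensor train of polynomial size; you are also right that one cannot shortcut this via ``TT rank equals unfolding rank,'' since the paper's TT language restricts entries to $\{-1,0,1\}$. The trade-off: your function must be tailored to the given order $<$ (which the succinctness definition permits, since you may pick a different witness for each $<$), whereas HWB is a single family that is exponentially hard for \emph{every} variable order, so the paper gets the universally quantified statement over $<$ from one fixed function at the price of citing the nontrivial $\Omega(2^{0.2n})$ lower bound; your lower bound is more elementary but order-specific. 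Both yield the theorem.
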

This theorem implies that any OBDD $\beta$ of size $|\beta|$ can be translated into tensor train $\alpha$ whose size $|\alpha|$ satisfies $|\alpha| \leq p(|\beta|)$ with polynomial $p$. In contrast, a sequence of Boolean functions $f_1, f_2, \ldots$ exists such that the size of the OBDD representing $f_i$ is exponentially larger than that of the corresponding TT$_<$.

Moreover, we show that all queries and transformations introduced by \cite{knowledge} that can be performed in polytime with OBDD$_<$ can also be performed in polytime with TT$_<$.
Table \ref{table:operation} summarizes an explanation of the computational complexity of each operation.

\begin{table*}
\centering
{\footnotesize
\begin{tabular}{@{}lllc@{}} \toprule
& Name & Explanation & Complexity\\
\midrule
\multirow{8}{*}{\rotatebox[origin=c]{90}{Query}} & CO & Answering whether $f$ has a model. & $\mathcal{O}(mr^2)$ \\
& VA & Answering whether $f$ is valid; that is, $f$ evaluates to $1$ for any assignment of variables. & $\mathcal{O}(mr^2)$ \\
& CE & Answering whether $f \models \gamma$ holds for a clause $\gamma$. & $\mathcal{O}((m+\ell)r^2)$\\
& EQ & Answering whether $f$ and $g$ are equivalent. & $\mathcal{O}((m+\ell)rq(r+q))$ \\
& SE & Answering whether $f \models g$ holds. & $\mathcal{O}((m+\ell)rq(r+q))$ \\ 
& IM & Answering whether $\gamma \models f$ holds for a term $\gamma$. & $\mathcal{O}((m+\ell)r^2)$ \\
& CT & Counting the number of models of $f$. & $\mathcal{O}(mr^2)$ \\
& ME & Enumerating the models of $f$. & $\mathcal{O}(Mmr^2)$ \\
\midrule
\multirow{8}{*}{\rotatebox[origin=c]{90}{Transformation}} &
$\wedge$BC& Constructing TT$_<$ representing the logical conjunction of $f,g$. & $\mathcal{O}((m+\ell)r^2q^2)$ \\
& $\vee$BC & Constructing TT$_<$ representing the logical disjunction of $f,g$. & $\mathcal{O}((m+\ell)r^2q^2)$\\
& $\lnot$C & Constructing TT$_<$ representing the logical negation of $f$. & $\mathcal{O}(mr^2)$ \\
& CD & Constructing TT$_<$ representing $f|\gamma$, 
$f$ conditioned by a $\ell$-literals term $\gamma$. & $\mathcal{O}(\ell r^2)$ \\
& SFO & Constructing TT$_<$ representing $\exists {x}.f$. for a variable $x$. & $\mathcal{O}(mr^4)$ \\
& $\wedge$C & Constructing TT$_<$ representing the logical conjunction of $f_1,f_2,\dots,f_k$. & $\bullet$ \\
& $\vee$C & Constructing TT$_<$ representing the logical disjunction of $f_1,f_2,\dots,f_k$. & $\bullet$ \\
& FO & Constructing TT$_<$ representing $\exists \mathbf{X}.f$ for a variable set $\mathbf{X}$. & $\bullet$ \\
\bottomrule
\end{tabular}
}
\caption{Explanations of operations and computational complexity on tensor trains. We use $m$ and $r$ (resp. $\ell$ and $q$) to represent the number of modes and the tensor train rank corresponding to $f$ (resp. $g$). We also use $\ell$ as the number of modes of a tensor train representing a term or a clause. $M$ denotes the number of models of $f$. Symbol $\bullet$ implies that the operation cannot be performed in polytime on the tensor train regardless of whether P $\neq$ NP.}
\label{table:operation}
\end{table*}

\begin{thm}
\label{thm_operation}
Suppose that Boolean functions $f,g$ are represented as TT$_<$ with an identical variable order $<$.
Then, CO, VA, EQ, SE, CT, $\wedge$BC, $\vee$BC, $\neg$C, and SFO can be performed in polytime with respect to the size of the tensor train representations of $f,g$.
CE, IM, and CD can be performed in polytime with respect to the tensor representation size of $f$ and the number of literals in the clause (for CE) or the term (for IM and CD).
ME can be performed in polytime with respect to the tensor train representation size of $f$ and the number of models.
\end{thm}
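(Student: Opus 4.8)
The plan is to translate each Boolean operation into an algebraic operation on tensor trains and to bound the rank and mode count of the result by a polynomial in the parameters $r,q,m,\ell$; by Lemma~\ref{lem:tt-polytime-criteria} this suffices to conclude polytime complexity in the representation size. The three purely algebraic transformations I would handle first are conjunction, negation, and disjunction. Conjunction $f\wedge g$ is the Hadamard product $\mathcal{T}\odot\mathcal{S}$, realized slice-wise by the Kronecker products $\mathcal{A}_i(\cdot,a,\cdot)\otimes\mathcal{B}_i(\cdot,a,\cdot)$, giving a train of rank at most $rq$. Negation $\neg f$ is $\mathbf 1-\mathcal{T}$, realized by the standard block construction for tensor-train sums applied to the all-ones train and a negated copy of $\mathcal{T}$, giving rank at most $r+1$. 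Disjunction follows by De Morgan, $f\vee g=\neg(\neg f\wedge\neg g)$. The key point making all three valid is that every entry stays in $\{-1,0,1\}$: Kronecker products of ternary slices are ternary, the block-diagonal sum only relocates existing entries into a larger zero matrix, and the all-ones train is rank-$1$ and ternary. This settles $\wedge$BC, $\vee$BC, and $\neg$C.

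The queries I would reduce to two contraction primitives. Summing a train over all inputs, $\sum_a\mathcal{T}(a)=M_1\cdots M_m$ with $M_i=\mathcal{A}_i(\cdot,0,\cdot)+\mathcal{A}_i(\cdot,1,\cdot)$, counts models over the active variables in time $\order{mr^2}$; scaling by $2^{n-m}$ yields CT, positivity yields CO, and equality with $2^m$ yields VA (all entries of $\mathcal{T}$ being $0/1$). The tensor-train inner product $\langle\mathcal{T},\mathcal{S}\rangle=\sum_a\mathcal{T}(a)\mathcal{S}(a)$, computed by the usual left-to-right sweep that maintains an $r_i\times q_i$ matrix, costs $\order{mrq(r+q)}$; since $\mathcal{T},\mathcal{S}$ are $0/1$ it equals the number of common models, so after aligning the two trains I can test EQ via $\langle\mathcal{T},\mathcal{S}\rangle=\langle\mathcal{T},\mathbf 1\rangle=\langle\mathcal{S},\mathbf 1\rangle$ and SE via $\langle\mathcal{T},\mathcal{S}\rangle=\langle\mathcal{T},\mathbf 1\rangle$. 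The two clause/term queries reduce to the transformations: $f\models\gamma$ holds iff conditioning $f$ on the term $\neg\gamma$ is inconsistent (CE), and $\gamma\models f$ holds iff $f\mid\gamma$ is valid (IM). Finally ME is a depth-first traversal of active assignments that carries the running left product together with precomputed right contractions $w_i=M_iw_{i+1}$ (with $w_{m+1}=1$), extending a prefix only while its nonnegative completion count stays positive, so no cancellation can hide a model and each of the $M$ models is produced with delay $\order{mr^2}$.

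Conditioning (CD) and single forgetting (SFO) are where the ternary restriction must be respected, which I expect to be the main obstacle. The naive way to fix $x_{\pi(i)}=b$ --- slice out $\mathcal{A}_i(\cdot,b,\cdot)$ and absorb it into a neighbor by matrix multiplication --- produces integer entries outside $\{-1,0,1\}$ and so leaves the TT language. Instead I would condition \emph{in place}: replace $\mathcal{A}_i$ by the tensor whose two slices both equal $\mathcal{A}_i(\cdot,b,\cdot)$, so the $i$-th input becomes a don't-care while every entry stays ternary and the rank is unchanged; iterating over the $\ell$ literals of $\gamma$ costs $\order{\ell r^2}$. SFO then uses $\exists x.f=(f\mid x)\vee(f\mid\neg x)$: build the two in-place conditionings, each of rank at most $r$, and combine them with $\vee$BC, giving rank $\order{r^2}$ and cost $\order{mr^4}$.

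The remaining ingredient, and the other place where care is needed, is alignment. Because $f$ and $g$ may use different active-variable sets while sharing the order $<$, every binary operation must first pad both trains to the union $Q_f\cup Q_g$. Inserting a don't-care mode at a bond of dimension $r_i$ is done with a tensor whose two slices are the $r_i\times r_i$ identity; this changes neither the represented function nor the rank and keeps entries ternary, and at most $\ell$ (resp.\ $m$) insertions are needed, which is the source of the $m+\ell$ factors in Table~\ref{table:operation}. With alignment in place, every operation above is expressed through Hadamard products, Kronecker and block constructions, slice copies, and contractions whose costs are polynomials in $r,q,m,\ell$; rewriting these as polynomials in the representation size via Lemma~\ref{lem:tt-polytime-criteria} completes the argument.
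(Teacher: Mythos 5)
Your proposal is correct and follows essentially the same route as the paper's appendix proof: align the two trains by inserting identity-slice cores, reduce the algebraic transformations to Hadamard products and block-structured sums (with De Morgan for disjunction), perform CD by in-place slice copying to preserve ternary entries, and obtain SFO as $(f|x)\vee(f|\neg x)$. The only substantive deviations are cosmetic: you handle CE and IM by conditioning on $\neg\gamma$ (resp.\ $\gamma$) followed by CO (resp.\ VA), whereas the paper encodes the clause or term as a rank-$1$ or rank-$2$ tensor train and reuses the inner-product test from SE, and your ME uses precomputed suffix contractions rather than the generic CO-plus-CD recursion; both variants give the same complexity bounds.
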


We also proved that the tensor train does not support transformations that OBDD$_<$ does not support.
\begin{thm}
\label{thm_nonpoly_operations}
$\wedge$C, $\vee$C, and FO cannot be performed in polytime in the size of tensor trains.
\end{thm}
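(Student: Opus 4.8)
The plan is to prove all three non-support results as \emph{unconditional} size separations, so that no assumption on $\mathrm{P}$ versus $\mathrm{NP}$ is needed: for each operation I will exhibit a family of inputs whose TT$_<$ size is polynomial but whose output necessarily has exponential TT$_<$ size, so that no algorithm can even write the result in polytime. The tool underlying every case is a rank lower bound. For a tensor train $\mathcal{T}=\mathcal{A}_1\cdots\mathcal{A}_m$ and a cut after mode $k$, consider the $2^k\times 2^{m-k}$ matricization $M_k$ whose $((a_1,\dots,a_k),(a_{k+1},\dots,a_m))$ entry equals $\mathcal{T}(a_1,\dots,a_m)$. Grouping the slice products in Eq.~\eqref{eq:tensor-train-products} shows $M_k=LR$ with $L\in\mathbb{R}^{2^k\times r_{k+1}}$ and $R\in\mathbb{R}^{r_{k+1}\times 2^{m-k}}$, so $\operatorname{rank}(M_k)\le r_{k+1}\le r$. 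Hence any function whose matricization along a cut compatible with $<$ has real rank $\rho$ must have TT$_<$ rank at least $\rho$, and by Lemma~\ref{lem:tt-polytime-criteria} TT$_<$ size at least $\rho+m$. I take as the canonical hard target the equality function $\bigwedge_{i=1}^n(x_i\leftrightarrow y_i)$ under the order $x_1<\cdots<x_n<y_1<\cdots<y_n$: its matricization at the cut separating the $x$-block from the $y$-block is the $2^n\times 2^n$ identity, of rank $2^n$.

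For $\wedge$C I take the $n$ inputs $f_i=(x_i\leftrightarrow y_i)$. Each depends on only two variables, so it has a tiny OBDD$_<$ and hence a constant-size TT$_<$ by Theorem~\ref{thm_succinct} (using $\pi$ to skip the remaining variables), making the total input size $\mathcal{O}(n)$; yet $\bigwedge_{i=1}^n f_i$ is exactly the equality function, of TT$_<$ size at least $2^n$. For $\vee$C I give the dual instance $f_i=(x_i\oplus y_i)$, each again of constant TT$_<$ size, while $\bigvee_{i=1}^n f_i$ is the negated equality function whose matricization is $J-I$ (all-ones minus identity), of rank $2^n$. Alternatively $\vee$C non-support follows from the $\wedge$C result by De Morgan, since $\bigwedge_i f_i=\neg\bigvee_i\neg f_i$ and negation is polytime by Theorem~\ref{thm_operation}; I will present the direct construction and note this reduction.

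The main obstacle is FO, because equality cannot be reached directly by forgetting: any gadget forcing the $n$ long-range equalities through forgotten ``bridge'' variables must itself carry $n$ bits across the central cut and so already has exponential rank. I circumvent this by packing the $\vee$C instance into one low-rank function with a selector and recovering the disjunction by forgetting the selector. With $b=\lceil\log_2 n\rceil$ selector variables $\mathbf{s}$ placed first in the order $s_1<\cdots<s_b<x_1<\cdots<x_n<y_1<\cdots<y_n$, I define the multiplexer $G(\mathbf{s},x,y)=x_{\langle\mathbf{s}\rangle}\oplus y_{\langle\mathbf{s}\rangle}$, where $\langle\mathbf{s}\rangle\in\{1,\dots,n\}$ is the index encoded by $\mathbf{s}$. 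A level-by-level state count shows $G$ has an OBDD$_<$ of width $\mathcal{O}(n)$ under this order—at any level one only remembers the index $\langle\mathbf{s}\rangle$ together with at most the single already-read bit $x_{\langle\mathbf{s}\rangle}$—hence a polynomial TT$_<$ by Theorem~\ref{thm_succinct}. However $\exists\mathbf{s}.\,G=\bigvee_{i=1}^n(x_i\oplus y_i)$ is again the negated equality function, of rank $2^n$ under the induced order $x_1<\cdots<x_n<y_1<\cdots<y_n$. Thus forgetting $\mathbf{s}$ turns a polynomial-size TT$_<$ into one of exponential size, so FO cannot be polytime.

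I expect the rank lower bound and the conjunction/disjunction constructions to be routine once the matricization factorization is in place; the delicate step is verifying the $\mathcal{O}(n)$ width of the multiplexer $G$ and confirming that its projection attains full rank $2^n$, which together make FO the crux of the theorem. Since all three arguments bound only the output \emph{size}, they hold regardless of the relationship between $\mathrm{P}$ and $\mathrm{NP}$, matching the meaning of the symbol $\bullet$ in Table~\ref{table:operation}.
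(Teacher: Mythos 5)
Your proposal is correct and follows essentially the same route as the paper: the same unfolding-rank lower bound via the equality function $\bigwedge_i(x_i\leftrightarrow y_i)$ handles $\wedge$C and $\vee$C, and a selector-plus-forgetting gadget handles FO. The only differences are cosmetic --- you use $\lceil\log_2 n\rceil$ binary selector variables and bound the rank of $J-I$ directly, whereas the paper uses $2n$ unary selectors $z_1,\dots,z_{2n}$ and closes the FO argument with one extra application of polytime $\neg$C.
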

The proofs of Theorems~\ref{thm_operation} and \ref{thm_nonpoly_operations} are provided in the Appendix.
This high-level concept is presented in the next section.

These theorems show that TT$_<$ achieves a good trade-off between its succinctness and the richness of the supporting operations, as TT$_<$ is more succinct than OBDD$_<$ while it supports the same set of polytime operations. The only existing representation that achieves the same trade-off as OBDD is SDD, which supports the same set of queries and operations as OBDD and is more succinct than OBDD~\cite{bova2016sdds}.
However, TT is clearly different from SDD, because the latter is a subset of NNF, whereas TT is not. Therefore, TT$_<$ is a unique representation that is not equivalent to any existing representation in the knowledge compilation map.

\section{Proofs}
\subsection{High-Level Ideas}

Before proceeding to detailed proofs of the theorems, we describe the high-level ideas.
First, we refer to the succinctness of the tensor train representation (Theorem~\ref{thm_succinct}).
We show that TT$_<$ is as succinct as OBDD$_<$ by providing a transformation scheme from an OBDD to a tensor train in a layer-wise manner; that is, we design each tensor $\mathcal{A}_i$ such that it represents a set of OBDD nodes corresponding to the $i$-th variable.
The transformation is illustrated in Figure~\ref{encode}.
This transformation only increases the size by a polynomial factor and we can prove that TT$_<$ is as succinct as OBDD$_<$.
Second, we show that OBDD$_<$ is not as succinct as TT$_<$ using hidden-weighted bit (HWB) functions, which are known to result in an exponential-size OBDD.
We demonstrate that a rank $2n$ tensor train is sufficient to represent an $n$-ary HWB.

In the proof of Theorem~\ref{thm_operation}, we show that most of the operations can be written as a combination of the sum, inner product, and Hadamard product (i.e., element-wise product) of tensors. 
For instance, considering that output of the tensor train is limited to $\{0,1\}$, the model counting and logical conjunction correspond to the inner and Hadamard products.
The inner and Hadamard products can be performed in polytime on the tensor trains and most other operations can be performed by combining them.

Theorem~\ref{thm_nonpoly_operations} is proven through contradictions. Assuming that the operation is possible in polynomial time, it follows that certain functions can be represented by tensor trains of a polynomial size. However, a contradiction can be observed because a known lower bound exists on the rank of the tensor train.
Owing to space limitations, the proofs of Theorems~\ref{thm_operation} and \ref{thm_nonpoly_operations} are provided in the Appendix.

\subsection{Proof of Theorem~\ref{thm_succinct}}
\label{ssec:proof1}

\paragraph{Proof that TT$_<$ is at least as succinct as OBDD$_<$:}

First, we show that TT$_<$ is at least as succinct as OBDD$_<$; that is, any Boolean functions that can be represented in a size $|\beta|$ OBDD$_<$ can also be represented in a size $|\alpha|$ TT$_<$ satisfying $|\alpha| \leq p(|\beta|)$, where $p$ is a polynomial.
We can prove this by showing a transformation from an OBDD into a tensor train representing the same Boolean function, which does not change the variable order $<$. 
The transformation comprises two steps: (i) transforming the OBDD into a level-wise smooth OBDD (LSBDD) and (ii) transforming the LSBDD into an equivalent tensor train. Because both transformations will cause at most polynomial increases in representations, we can show that the tensor train is as succinct as the OBDD. 

We convert an OBDD into the corresponding LSBDD, which we define as follows:
\begin{dfn}
\label{dfn:lsbdd}
Let $V \subseteq \{x_1, \ldots, x_n\}$ be a subset of variables and we assume a total order $<$ over variables $x_1, \ldots, x_n$ exists. An LSBDD with respect to $V$ is an OBDD where every non-terminal node $v$ satisfies (i) $\textsf{\upshape var}(v) \in V$ and (ii) no $x \in V$ exists such that $\textsf{\upshape var}(v) < x < \textsf{\upshape var}(c)$ for every child $c$ unless $\textsf{\upshape var}(v)$ is the largest variable among $V$.
\end{dfn}
Note that LSBDD differs from a complete OBDD~\cite{bollig2012efficient}, in which every path from the root to a sink node has length $n$.
They differ in that the LSBDD allows variable skipping whereas the complete OBDD does not.
LSBDD is introduced because converting OBDD to complete OBDD increases the representation size by a factor of $n$ at worst, which is not a polynomial in the original representation size.
Figure~\ref{encode} shows an example of OBDD and LSBDD with respect to  $V = \{x_1, x_2, x_3\}$, representing the same Boolean function $f = (\neg x_1 \wedge \neg x_2) \vee (\neg x_1 \wedge \neg x_3) \vee (\neg x_2 \wedge \neg x_3)$.

We demonstrate a procedure for transforming an OBDD into an equivalent LSBDD. Without loss of generality, we assume that the OBDD follows variable order $x_1 < x_2 < \cdots < x_n$.
Let $V(B)$ be the set of labels appearing in OBDD $B$ defined as
$V(B) = \{\textsf{var}(v) \mid v \in N \setminus\{\top, \bot\}\}$. We show that we can construct LSBDD respecting $V(B)$ whose size is bounded by a polynomial of the size of OBDD $B$.
\begin{lem}
\label{lem:obdd-to-lsbdd}
For any OBDD $B$, we can construct an LSBDD equivalent to $B$ with respect to $V(B)$ and equivalent to $B$, whose size is bounded by a polynomial of $|B|$ in polytime.
\end{lem}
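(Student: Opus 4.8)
The plan is to obtain the LSBDD by \emph{smoothing} the edges of $B$: whenever an edge skips over a variable that lies in $V(B)$, I would splice a chain of ``pass-through'' nodes onto that edge so that no variable of $V(B)$ is skipped any more. First I would observe that, since $B$ is rooted with every node reachable, the variable-order constraint forces the root to carry the smallest variable of $V(B)$ (it is an ancestor of every other non-terminal node, so its label is below all of them). Consequently no nodes need to be added above the root, and Definition~\ref{dfn:lsbdd} only constrains the downward edges, so it suffices to repair those.

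Concretely, for a target node $c$ with $\textsf{var}(c) = x_k$ (treating a terminal as lying below the largest variable of $V(B)$) and each level $x_j \in V(B)$ with $x_j < x_k$ that is skipped, I would introduce a pass-through node $w_j$ labelled $x_j$ whose \textsf{LO} and \textsf{HI} edges both point to the pass-through node of the next skipped level, the last of them pointing to $c$. Every edge $(u,c)$ of $B$ that skipped the variables strictly between $\textsf{var}(u)$ and $x_k$ is then redirected into this chain at the appropriate level. Because each pass-through node satisfies $\textsf{lo}(w_j) = \textsf{hi}(w_j)$, the function it represents is independent of $x_j$, so $f_{w_j} = f_c$ along the spliced block and the overall Boolean function is preserved; edges into the terminals $\top,\bot$ are handled identically, using a single shared chain down to the largest variable of $V(B)$. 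After this step every non-terminal node has its children at the immediately following level of $V(B)$, so conditions (i) and (ii) of Definition~\ref{dfn:lsbdd} hold and the output is an LSBDD with respect to $V(B)$ equivalent to $B$.

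For the size and time bounds I would count the inserted nodes. Sharing the chain of a given target $c$ across all of $c$'s incoming edges, at most $|V(B)|-1$ pass-through nodes are created per target, and the terminals contribute at most $2|V(B)|$ further nodes in total; hence the LSBDD has at most $|B| + O(|B|\cdot|V(B)|) = O(|B|^2)$ nodes, which is polynomial in $|B|$ since $|V(B)| \le |B|$. The construction is a single traversal of the $O(|B|)$ edges of $B$, inserting $O(|V(B)|)$ nodes per edge, so it runs in $O(|B|^2)$ time.

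The main obstacle will be keeping the blow-up polynomial rather than a factor of $n$: a careless smoothing that filled in \emph{every} skipped variable, including variables absent from $B$, would reproduce the complete OBDD and multiply the size by $n$, which need not be polynomial in $|B|$. The key point is therefore to splice nodes only for the variables in $V(B)$ and to share the resulting chains across parallel incoming edges, which caps the overhead at $|V(B)| \le |B|$ per target node.
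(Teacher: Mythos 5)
Your proposal is correct and follows essentially the same route as the paper: both splice ``pass-through'' nodes with $\textsf{lo}=\textsf{hi}$ onto edges that skip a variable of $V(B)$ (treating terminals as lying below $\max V(B)$), restrict the filling to $V(B)$ to avoid the factor-$n$ blow-up of a complete OBDD, and bound the result by $O(|B|\cdot|V(B)|)=O(|B|^2)$. The only difference is presentational --- the paper inserts one node per step and iterates to a fixed point, whereas you build each target's chain in one pass and share it across incoming edges, which yields a marginally tighter count but the same polynomial bound.
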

\begin{proof}
We can convert any OBDD $B$ into an LSBDD with respect to $V(B)$ by repeating the following procedure: Let $e$ be an attributed edge; that is, either the $\textsf{LO}$ or $\textsf{HI}$ edge in $B$ whose source is $v$ and whose target is $c$.
If $c$ is the terminal and $x \in V(B)$ such that $\textsf{var}(v) < x$ exists, we remove $e$ and add new node $u$ with $\textsf{var}(u) = \max V(B)$ and $\textsf{lo}(u) = \textsf{hi}(u) = c$.
Then, we add edge $e^\prime$ whose source is $v$ and target is $u$.
Edge $e^\prime$ has the same attribute with $e$; that is, if $e$ is a $\textsf{LO}$ (resp. $\textsf{HI}$) edge, then $e^\prime$ is also an $\textsf{LO}$ (resp. $\textsf{HI}$) edge.
If $c$ is a non-terminal node and $x \in V(B)$ satisfying $\textsf{var}(v) < x < \textsf{var}(c)$ exists, we remove $e$ and add new node $u$ with $\textsf{var}(u) = \hat{x}$ and $\textsf{lo}(u) = \textsf{hi}(u) = c$, where $\hat{x}$ is the largest $x \in V(B)$ satisfying $\textsf{var}(v) < x < \textsf{var}(c)$.
Then, we add an edge $e^\prime$ whose source is $v$ and the target is $u$.
The edge $e^\prime$ has the same attribute with $e$. We repeat this substitution until no further substitution is possible to obtain an LSBDD.

We can show that the OBDD obtained by the above procedure is an LSBDD with respect to $V(B)$ because every non-terminal node satisfies the conditions in Definition~\ref{dfn:lsbdd}.
In addition, the obtained LSBDD
represents the same Boolean function because the Boolean function corresponding to $c$ equals $u$ after substitution. Furthermore, we can show that the size of the LSBDD is not larger than $|B|(|V(B)|-1)$ because we add at most $|V(B)|-1$ nodes for every non-terminal node in $B$. Therefore, the size of an LSBDD is bounded by $p(|B|)$, where $p$ is a polynomial because $|V(B)|\leq |B|$.
\end{proof}
Next, we describe the procedure for transforming an LSBDD into TT$_<$. Let $L = (N, E)$ be an LSBDD, where $N$ denotes a set of nodes and $E$ denotes a set of attributed edges. Let $V(L)$ be a set of variables appearing in $L$ and $m = |V(L)|$.
We first define mapping $\pi$ by setting $x_{\pi(i)}$, which is the $i$-th smallest variable in $V(B)$ for $1 \leq i \leq m$.
Next, we define $\mathcal{A}_1, \ldots, \mathcal{A}_m$. 
Let $S_i$ ($1 \leq i \leq m$) be a sequence of all non-terminal nodes $v \in N$ satisfying $\textsf{var}(v) = x_{\pi(i)}$. We order and number the nodes in $S_i$ arbitrarily and denote as $v_{i, 0}, v_{i,1}, \ldots, v_{i, r_i-1}$, where $r_i = |S_i|$. For $1 \leq i < m$, we set $\mathcal{A}_i$ as an $r_i \times 2 \times r_{i+1}$ tensor representing the nodes in $S_i$:
\begin{align*}
\mathcal{A}_i(j,0,k)&=
\begin{cases}
1 & (\textsf{lo}(v_{i,j})=v_{i+1,k}) \\
0 & (\textrm{otherwise})
\end{cases}
,
\end{align*}
\begin{align*}
\mathcal{A}_i(j,1,k)&=
\begin{cases}
1 & (\textsf{hi}(v_{i,j})=v_{i+1,k}) \\
0 & (\textrm{otherwise})
\end{cases}
.
\end{align*}
Similarly, we define $\mathcal{A}_m$ as an $r_m \times 2 \times 1$ tensor representing nodes in $S_m$:
\begin{align*}
\mathcal{A}_m(j,0,0)&=
\begin{cases}
1 & (\textsf{lo}(v_{m,j})=\top) \\
0 & (\textsf{lo}(v_{m,j})=\bot)
\end{cases}, \\
\mathcal{A}_m(j,1,0)&=
\begin{cases}
1 & (\textsf{hi}(v_{m,j})=\top) \\
0 & (\textsf{hi}(v_{m,j})=\bot)
\end{cases}
.
\end{align*}
Each tensor $\mathcal{A}_i$ encodes edges in the LSBDD. In other words, if an edge connecting $v_{i,j}$ to $c_{i+1, k}$ exists, the corresponding element of $\mathcal{A}_i$ is set to 1. Tensor $\mathcal{A}_m$ represents edges connecting nodes in $S_{m}$ to the $\top$ terminal node.

\begin{lem}
\label{lem:lsbdd-equiv-tt}
The pair of tensor sequence $\mathcal{A}_1,\ldots,\mathcal{A}_m$ and mapping $\pi$ obtained by the above procedure is a tensor train representation of the Boolean function that the input LSBDD corresponds to.
\end{lem}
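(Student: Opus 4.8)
The plan is to verify the defining identity of a tensor train representation, namely Eq.~\eqref{eq:tensor-train-products}, and to check that its right-hand side agrees with the value the input LSBDD assigns to the corresponding input. Concretely, I would fix an arbitrary assignment $(a_1,\ldots,a_m)\in\{0,1\}^m$, read it as the assignment $x_{\pi(i)}=a_i$ to the variables in $V(L)$ (leaving the variables outside $Q=\{\pi(i)\mid i\in[m]\}$ as don't-cares, exactly as in Definition~\ref{dfn:boolean-function-tensor}), and show that the matrix product $\mathcal{A}_1(\cdot,a_1,\cdot)\cdots\mathcal{A}_m(\cdot,a_m,\cdot)$ equals $1$ precisely when tracing this assignment through the LSBDD from the root terminates at $\top$, which by the recursive OBDD semantics (i)--(iii) is the function value.

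The first step is to record the structural consequences of smoothness. Because the LSBDD satisfies condition (ii) of Definition~\ref{dfn:lsbdd} and the smoothing of Lemma~\ref{lem:obdd-to-lsbdd} has inserted nodes to fill every gap, each non-terminal node at level $i<m$ has both of its children at level $i+1$, and each node at level $m$ has both children among the terminals $\{\top,\bot\}$. I would also note that the root, being an ancestor of every other node, is the unique node carrying the smallest variable $x_{\pi(1)}$; hence $r_1=1$ as a tensor train requires, and the root is $v_{1,0}$. Together with the fact that every non-terminal node has exactly one $\textsf{LO}$ and one $\textsf{HI}$ edge, these observations yield the key property: for $i<m$, each row $\mathcal{A}_i(j,a_i,\cdot)$ is a standard-basis (indicator) vector pointing at the unique level-$(i+1)$ node reached from $v_{i,j}$ along its $a_i$-labeled edge.

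The core of the argument is then an induction on the level. Writing $P_i=\mathcal{A}_1(\cdot,a_1,\cdot)\cdots\mathcal{A}_i(\cdot,a_i,\cdot)$, I would prove that $P_i$ is the indicator row vector $e_k$, where $v_{i+1,k}$ is the node reached from the root by following the $a_1,\ldots,a_i$ edges. The base case $P_1=\mathcal{A}_1(0,a_1,\cdot)$ is immediate from the indicator property applied to the root. For the step, the hypothesis $P_{i-1}=e_j$ gives $P_i=e_j\,\mathcal{A}_i(\cdot,a_i,\cdot)=\mathcal{A}_i(j,a_i,\cdot)$, which by the indicator property is $e_k$ for the successor of $v_{i,j}$. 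After $m-1$ multiplications we reach $P_{m-1}=e_j$ for the level-$m$ node $v_{m,j}$ selected by $a_1,\ldots,a_{m-1}$, and one final multiplication yields $\mathcal{T}(a_1,\ldots,a_m)=\mathcal{A}_m(j,a_m,0)$, which by construction is $1$ iff the $a_m$-edge of $v_{m,j}$ leads to $\top$. This matches the LSBDD's output, and since all entries lie in $\{0,1\}\subseteq\{-1,0,1\}$ with $r_1=r_{m+1}=1$, the pair is a legitimate tensor train representation.

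The main obstacle I anticipate is not the induction itself but the bookkeeping needed to justify the indicator property cleanly — in particular, arguing that smoothness really does force all edges to connect consecutive levels (so that the nonzero entry of $\mathcal{A}_i(j,a_i,\cdot)$ indexes a genuine element of the enumeration of $S_{i+1}$), and that the variables outside $Q$ may be ignored without affecting the output. Making precise the identification between ``the terminal reached by path-tracing'' and the recursively defined value $f_v$, and confirming that it aligns with the don't-care convention of Definition~\ref{dfn:boolean-function-tensor}, is the part that will require the most care.
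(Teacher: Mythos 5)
Your proof is correct, but it takes a genuinely different route from the paper's. The paper proves the lemma by a bottom-up induction on levels whose invariant is semantic: for each node $v_{i,j}$, the suffix product $\mathcal{A}_{i,j}\mathcal{A}_{i+1}\cdots\mathcal{A}_m$ (where $\mathcal{A}_{i,j}$ is the $j$-th slice of $\mathcal{A}_i$) is a tensor representation of the Boolean function $f_{v_{i,j}}$; the base case enumerates the four possible child patterns $(\top,\top),(\top,\bot),(\bot,\top),(\bot,\bot)$ of a level-$m$ node, and the inductive step is exactly the Shannon expansion (iii) in the OBDD semantics. You instead fix an assignment and run a top-down induction on prefix products, with the invariant that $P_i$ is the one-hot indicator of the node reached by tracing $a_1,\ldots,a_i$ from the root. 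Both arguments rest on the same structural facts --- smoothness forces every edge to connect consecutive levels, and the root is the unique level-$1$ node so $r_1=1$ --- which you make explicit and the paper leaves implicit. What each buys: the paper's version directly yields function equality at every node, a stronger and more modular intermediate statement that plugs into the recursive definition of $f_v$ without ever invoking path-tracing semantics; yours gives a pointwise, operational reading in which the matrix product literally simulates BDD traversal, at the cost of the extra bookkeeping you yourself flag (identifying the terminal reached by path-tracing with the recursively defined value $f_v$, and checking that variables outside $Q$ are genuine don't-cares for both the LSBDD and Definition~\ref{dfn:boolean-function-tensor}). Neither gap is substantive, so your argument stands as a valid alternative.
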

The full proof of Lemma~\ref{lem:lsbdd-equiv-tt} is given in the Appendix.
Lemma~\ref{lem:lsbdd-equiv-tt} shows that TT$_{<}$ can be constructed to represent an equivalent Boolean function with OBDD$_<$.
The number of elements in $\mathcal{A}_i$ is $2r_i r_{i+1}$, where $r_i$ denotes the number of non-terminal LSBDD nodes in $S_i$.
Because $|L|=\sum_{i=1}^{m}r_i$, $\sum_{i}|\mathcal{A}_i|$ is bounded by $|L|^2$, TT$_<$ is at least as succinct as OBDD$_<$.
Because efficient compilation methods for the OBDD~\cite{huang2004using,knuth2011art} exist, they can also be used for the compilation of a tensor train by encoding an OBDD to a tensor train.

\paragraph{Proof that OBDD$_<$ is not as succinct as TT$_<$:}

In the following, we demonstrate the existence of a function that can be represented in a polynomial-sized tensor train but not in a polynomial-sized OBDD.
We consider the HWB function~\cite{bryant1991complexity} defined as follows.
\begin{dfn}
Let $\mathbf{x}=\{x_1,x_2,\dots,x_n\}$ be the input variable. 
The Boolean function {\rm HWB}$_n$ is defined as follows:
\begin{align*}
\rm{HWB}_n(\mathbf{x})=
\begin{cases}0 & (\mathbf{x}=\mathbf{0})\\ 
x_{\sum x_i} & (\text{\upshape otherwise})
\end{cases},
\end{align*}
where $\mathbf{0}$ is an all-0 vector.
\end{dfn}
HWB$_n$ is a function that outputs $x_k$ when the number of $1$s in the input is $k$. 
HWB has long been used to measure the succinctness of Boolean function representations. 
In particular, the following characteristic of OBDDs is known. 
\begin{lem}\cite{bddhwbcomplexity}
The OBDD size of HWB$_n$ is $\Omega(2^{0.2n})$.
\end{lem}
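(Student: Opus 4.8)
The plan is to prove this lower bound with the standard \emph{subfunction-counting} (equivalently, one-way communication) method for OBDDs: for any fixed variable order and any ``cut'' of that order into a top set $A$ (the first $k$ variables) and a bottom set $B$ (the remaining $n-k$), the minimal OBDD has at least as many nodes labeled by a $B$-variable as there are distinct subfunctions $\mathrm{HWB}_n$ induces on $B$ when the $A$-variables are fixed. Since the OBDD size dominates the node count at every cut, it suffices to exhibit, for an arbitrary order, a single cut at which $\mathrm{HWB}_n$ has $2^{\Omega(n)}$ distinct subfunctions. The key observation driving the construction is that $\mathrm{HWB}_n$ outputs the bit whose \emph{index} equals the total Hamming weight $|a|+|b|$; hence if all top-assignments $a$ share one weight $\ell$, then for a fixed $b$ the read index $w=\ell+|b|$ is the same for every $a$, so two top-assignments are told apart by $b$ precisely when $w$ lands on a top index ($w\in A$) at which they differ.

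First I would fix an arbitrary order and choose an index window $I\subseteq A$ lying inside the reachable interval $[\ell,\ell+(n-k)]$ with $\ell=|I|/2$. The fooling set is the collection of assignments that are $0$ on $A\setminus I$ and have Hamming weight exactly $|I|/2$ on $I$; each has top-weight $\ell=|I|/2$. Any two distinct members differ at some index $i\in I$, and since $i$ lies in the reachable interval I can pick $b$ with $|b|=i-\ell$ to force the read pointer to $w=i\in A$, yielding outputs $a_i\neq a_i'$. Thus these assignments induce pairwise-distinct subfunctions on $B$, and the cut carries at least $\binom{|I|}{|I|/2}=\Omega\!\big(2^{|I|}/\sqrt{|I|}\big)$ nodes.

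The remaining, and genuinely hard, step is to guarantee a large window $I$ \emph{for every} order and thereby pin down the constant. Here the tension is intrinsic: enlarging the cut $k$ grows $A$ but shrinks the reachable window length $n-k$, and an adversarial order may spread the indices of $A$ thinly across $[n]$ so that no short window captures many of them. The hard part will be choosing the cut $k$ and the window placement simultaneously; I would treat the nested family $A_1\subset\cdots\subset A_n$ of order-prefixes and apply an averaging/pigeonhole argument over window positions to secure a cut with $|A\cap W|\ge 0.2n+o(n)$ subject to the starting constraint $\min I\ge |I|/2$. Optimizing this trade-off against the worst case is exactly what produces the exponent $0.2$; substituting $|I|\approx 0.2n$ into the binomial estimate gives a node count of $\Omega(2^{0.2n})$, and since the OBDD size is at least the number of nodes at this single cut, the bound $\Omega(2^{0.2n})$ follows for every order.
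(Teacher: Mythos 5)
The paper does not prove this lemma: it is imported as a known result from \cite{bddhwbcomplexity} and used as a black box in the proof that OBDD$_<$ is not as succinct as TT$_<$, so there is no in-paper argument to compare yours against and it must stand on its own. Your first two paragraphs set up the correct and standard machinery: the number of distinct subfunctions obtained by fixing the first $k$ variables of the order lower-bounds the OBDD size (up to the two terminal nodes, your phrasing is fine), and your fooling set of weight-$|I|/2$ assignments on a window $I$ is genuinely pairwise distinguishable, since a differing index $i \in I$ lies in the reachable interval and a bottom assignment of weight $i - \ell$ steers the read pointer onto it.

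The gap is that the third paragraph is where the lemma actually lives, and it is not carried out; worse, the normalization you fixed in paragraph two makes it unprovable as stated. By zeroing all of $A \setminus I$ you force $\ell = |I|/2$, which pins the reachable interval to $[\,|I|/2,\ |I|/2 + (n-k)\,]$. Take the reversed order, so that $A_k = \{n-k+1, \dots, n\}$: the interval ends at $|I|/2 + n - k = (n-k+1) + (|I|/2 - 1)$, so it meets $A_k$ in at most $|I|/2$ indices and can never contain a window of size $|I|$, for any $k$ and any $|I| > 0$. The missing ingredient is the freedom to set $t$ of the variables in $A \setminus I$ to $1$, making the base weight $\ell = t + |I|/2$ adjustable so the reachable interval can be slid to wherever the order has concentrated the first $k$ indices; only with that extra parameter does the min--max over $(k, t, I)$ close, and carrying out that optimization against the worst-case order is precisely what produces the exponent $0.2$. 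Announcing an ``averaging/pigeonhole argument'' does not discharge this step: as written you have shown that the OBDD size is at least $\binom{s}{\lfloor s/2 \rfloor}$ for some order-dependent quantity $s$, not that $s \geq 0.4n$ (which is what the middle binomial coefficient needs to reach $\Omega(2^{0.2n})$) for every order.
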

To prove that an OBDD is not as succinct as a tensor train, proving the following proposition is sufficient.
\begin{prop}
{\rm HWB}$_n$ can be represented by a tensor train of size $\mathcal{O}(n^2)$.
\label{prop:HWB_TT}
\end{prop}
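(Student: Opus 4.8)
The plan is to exhibit an explicit tensor train of rank $O(n)$ computing $\mathrm{HWB}_n$ under the order $\pi(i)=i$. The starting point is the reformulation $\mathrm{HWB}_n(x)=\sum_{k=1}^{n}[\,\sum_{i}x_i=k\,]\,x_k$, where $[\cdot]$ is the indicator: since at most one $k$ equals the total weight $s=\sum_i x_i$, this sum selects $x_s$, and it returns $0$ on the all-zero input automatically. I would read this as a matrix-product computation that scans $x_1,\dots,x_n$ from left to right and, at each position carrying a $1$, \emph{guesses in superposition} that this position is the answer $k=s$ and then \emph{verifies} the guess by a linear count. Because a tensor train sums over all such branches (Eq.~\eqref{eq:tensor-train-products}), the construction will succeed precisely when exactly one branch survives.

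Concretely, I would first fix the bond index space, shared by every internal bond, to consist of two registers: counting states $P_0,\dots,P_n$ and verification states $Q_0,\dots,Q_n$, for a total dimension $2(n+1)$. State $P_c$ means ``no guess committed yet and $c$ ones seen so far,'' while $Q_d$ means ``the answer has been committed (to value $1$) and $d$ further ones are still required.'' I would then define the position-dependent slices of $\mathcal{A}_i$ by
\begin{gather*}
P_c \xrightarrow{\,x_i=0\,} P_c,\qquad
P_c \xrightarrow{\,x_i=1\,} P_{c+1}\ \text{and}\ Q_{\,i-c-1},\\
Q_d \xrightarrow{\,x_i=0\,} Q_d,\qquad
Q_d \xrightarrow{\,x_i=1\,} Q_{d-1}\ (\text{dead if }d=0),
\end{gather*}
with the left boundary vector selecting $P_0$ and the right boundary vector accepting only $Q_0$. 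Every entry produced this way lies in $\{0,1\}$, so the ternary requirement on $\mathcal{A}_i$ holds automatically. The one subtlety is that the committing transition out of $P_c$ on reading a $1$ initializes the countdown to $d=i-c-1$; this is exactly where the position dependence of $\mathcal{A}_i$ is used, and one checks $d\ge 0$ always, since $c\le i-1$.

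The heart of the argument is correctness, which I expect to be the main obstacle. I would show that the product $\mathcal{A}_1(\cdot,x_1,\cdot)\cdots\mathcal{A}_n(\cdot,x_n,\cdot)$ equals the number of accepting branches, and that for an input of weight $s$ exactly one branch accepts iff $x_s=1$. A branch commits at some $i$ with $x_i=1$, sets $d=i-c-1$ (with $c$ the number of ones strictly before $i$), and reaches $Q_0$ at the end iff the number of ones after $i$ equals $d$; substituting the weights shows this occurs precisely when $i=s$. Hence the unique surviving branch is the commit at $i=s$, which exists iff $x_s=1$, yielding output $x_s$; and when $s=0$ no position carries a $1$, so the all-zero case returns $0$. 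The delicate points are ruling out double counting and spurious acceptances (from never committing, or from underflow at $Q_0$), and checking that the rank-$1$ boundary tensors reproduce all cases, including the all-ones input.

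Finally I would bound the size. Each $\mathcal{A}_i$ has shape $2(n+1)\times 2\times 2(n+1)$, so the rank is $2(n+1)=O(n)$, matching the ``rank $2n$'' target; moreover every slice is a (partial) shift with at most two nonzeros per row, so each $\mathcal{A}_i$ carries only $O(n)$ nonzero ternary entries, and summing over the $n$ modes gives the claimed $\mathcal{O}(n^2)$ bound. Combined with the $\Omega(2^{0.2n})$ OBDD lower bound for $\mathrm{HWB}_n$, this yields the separation. The rank and size estimates are routine once the single-surviving-branch count is established, so I would concentrate the effort there.
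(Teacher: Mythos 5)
Your construction is correct, but it is genuinely different from the paper's. The paper also builds a rank-$\Theta(n)$ train with bond dimension $2n$, but its state vector is split into a first half that one-hot encodes the running Hamming weight (mod $n$) via cyclic-shift matrices $I_1$ and a second half that physically stores the prefix bits $a_1,\dots,a_k$, cyclically shifted so that after all $n$ cores the $(n{+}1)$-st coordinate holds $a_{\sum a_i}$; correctness is an induction on this invariant (Lemma~\ref{lem:hwb-lemma}), and the right boundary vector just reads off that coordinate. You instead never store the input: your cores implement a nondeterministic ``commit-and-count-down'' automaton on states $P_0,\dots,P_n$ and $Q_0,\dots,Q_n$, and correctness reduces to showing that the contraction counts accepting branches and that exactly one branch accepts iff $x_s=1$. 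Your checks are the right ones and they go through: determinism of the $P$-run forces a unique commit point, the countdown $d=i-c-1\ge 0$ is well defined, underflow at $Q_0$ and non-committing runs are killed by the transition structure and the right boundary, and all entries stay in $\{0,1\}$ so the ternary restriction is free. What each approach buys: the paper's makes the output semantics (``look up bit number $\sum x_i$'') immediate at the cost of somewhat fiddly shift bookkeeping, while yours avoids storing the input and gives a cleaner single-accepting-path argument in the familiar weighted-automaton view of tensor trains. One shared caveat: under the paper's own size definition ($\sum_i r_i d_i r_{i+1}$ counts \emph{all} elements), a rank-$\Theta(n)$ train with $n$ cores has $\Theta(n^3)$ elements, and the $\mathcal{O}(n^2)$ figure is really a count of nonzero entries; you state this explicitly, the paper does not, and in either accounting the polynomial bound suffices for the separation against the $\Omega(2^{0.2n})$ OBDD lower bound.
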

To prove the above proposition, we introduce a tensor train representing a Boolean function whose rank is $2n$.
We then show that the representation equals HWB$_n$.

For simplicity, we assume that the variable order is $x_1 <\dots < x_n$. 
An extension is easy for more general ordering. 

Let $\mathcal{A}_0$ be a $2n$-dimensional vector such that only the first element is $1$ and the others are $0$. 
Similarly, $\mathcal{A}_{n+1}$ is defined as a $2n$-dimensional vector such that only the $(n+1)$-st element is $1$ and the others are $0$. 
We also define a sequence of $2n\times 2\times 2n$ tensors $\mathcal{A}_i\ (i\in \{1,2,\dots,n\})$ as follows:
\begin{align*}
\mathcal{A}_i(\cdot,0,\cdot) = 
\begin{pmatrix}
I & O \\
O & I \\
\end{pmatrix},\quad
\mathcal{A}_i(\cdot,1,\cdot) = 
\begin{pmatrix}
I_1 & I^\prime_{i} \\
O & I_{n-1} \\
\end{pmatrix},
\end{align*}
where $O$ is an $n\times n$ matrix whose elements are all $0$, $I$ is an $n\times n$ identity matrix, $I_k$ is an $n \times n$ matrix obtained by cyclically right-shifting each row vector of $I$ by $k$, 
and $I^\prime_k$ is the left-right inversion of $I_k$, i.e., ${I^\prime_k}_{i,j}={I_k}_{i,n-1-j}$.
Under this definition, $\mathbf{v}^\top I_k$ is a vector obtained by cyclically right-shifting the elements of $n$-dimensional vector $\mathbf{v}$ by $k$. 
In addition, we define $\tilde{\mathcal{A}}_i\ (1\leq i\leq n)$ as follows:
\begin{align*}
\tilde{\mathcal{A}}_1 = \mathcal{A}_0\mathcal{A}_1,\quad
\tilde{\mathcal{A}}_i = \mathcal{A}_i,\quad
\tilde{\mathcal{A}}_n = \mathcal{A}_n\mathcal{A}_{n+1}.
\end{align*}
Next, we show that $((\tilde{\mathcal{A}}_1, \ldots, \tilde{\mathcal{A}}_n), \pi)$ is a tensor train representing HWB$_n$, where $\pi: [n] \to [n]$ is an identity function for the assumed variable order.
Let $\mathcal{T}=\mathcal{A}_0\mathcal{A}_1\cdots\mathcal{A}_{n+1}$. The output of the tensor train corresponding to input values $(a_1, \ldots, a_n) \in \{0, 1\}^n$ for variables $x_1, \ldots, x_n$ is $\mathcal{A}_0^\top\mathcal{A}_1(\cdot,a_1,\cdot)\cdots\mathcal{A}_n(\cdot,a_n,\cdot)\mathcal{A}_{n+1}$ by following Eq. (\ref{eq:tensor-train-products}). 

Let $\mathcal{V}_i(a_1, \ldots, a_i)$ be a $2n$-dimensional vector defined as $\mathcal{V}_i(a_1, \ldots, a_i) = \mathcal{A}_0^\top\mathcal{A}_1(\cdot,a_1,\cdot) \cdots \mathcal{A}_i(\cdot,a_i,\cdot)$. We show that $\mathcal{V}_i$ retains sufficient information regarding the values assigned to variables $x_1, \ldots, x_i$ for evaluating HWB$_n$. We design the first $n$ elements of $\mathcal{V}_i$ to encode $\sum_{j=1}^{i}a_j$ and the remaining $n$ elements to store values $a_1, \ldots, a_i$ while maintaining the first element as $a_{\sum a_j}$. Therefore, the $(n+1)$-st element of $\mathcal{V}_m$ is equal to the HWB$_n$ output. The following lemma shows that $\mathcal{V}_i$ stores such information.
\begin{lem}
\label{lem:hwb-lemma}
For any $k\in\{0,1,\dots,n\}$ and $(a_1, \ldots, a_k) \in \{0, 1\}^{k}$, $\mathcal{V}_k = \mathcal{A}_0\mathcal{A}_1(\cdot,a_1,\cdot)\cdots\mathcal{A}_k(\cdot,a_k,\cdot)$ is a $2n$-dimensional vector that satisfies the following properties:
(i) For the first $n$ elements, only one element is $1$ and the others are $0$, and the $j$-th element is $1$ iff $\sum_{i=1}^k a_i = j-1\ (\text{\upshape mod}\ n)$.
(ii) The remaining $n$ elements are made by cyclically left-shifting $(a_1,\dots,a_k,0,\dots,0)$ by $j$. 
Therefore, the $(n+1)$-th element of $\mathcal{V}_k$ equals $a_{\sum_{i=1}^{k}a_i}$.
\label{lemma}
\end{lem}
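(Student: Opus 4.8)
The plan is to prove Lemma~\ref{lem:hwb-lemma} by induction on $k$, tracking the claimed invariant on the $2n$-dimensional vector $\mathcal{V}_k$ as each successive slice $\mathcal{A}_i(\cdot, a_i, \cdot)$ is applied from the right. I would split $\mathcal{V}_k$ into its two halves: the first $n$ coordinates (the ``counter block,'' claimed to be a one-hot vector marking $\sum_{i=1}^k a_i \bmod n$) and the last $n$ coordinates (the ``memory block,'' claimed to store a cyclically left-shifted copy of $(a_1, \ldots, a_k, 0, \ldots, 0)$). The base case $k = 0$ is immediate: $\mathcal{V}_0 = \mathcal{A}_0^\top$ is the standard basis vector $e_1$, whose counter block is one-hot at position $1$ (matching $\sum = 0 \equiv 0$) and whose memory block is the zero vector $(0, \ldots, 0)$, as required.

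For the inductive step, assume $\mathcal{V}_{k-1}$ satisfies (i) and (ii), and compute $\mathcal{V}_k = \mathcal{V}_{k-1}\,\mathcal{A}_k(\cdot, a_k, \cdot)$ in the two cases $a_k = 0$ and $a_k = 1$. When $a_k = 0$, the slice is the block-diagonal matrix $\bigl(\begin{smallmatrix} I & O \\ O & I \end{smallmatrix}\bigr)$, so both halves of $\mathcal{V}_{k-1}$ are preserved unchanged; this is consistent with the invariant because appending a $0$ changes neither the running sum nor (modulo the zero-padding convention) the stored memory. When $a_k = 1$, the slice is $\bigl(\begin{smallmatrix} I_1 & I'_i \\ O & I_{n-1} \end{smallmatrix}\bigr)$, and here the key is the shift semantics established in the text: right-multiplying a row vector by $I_1$ cyclically right-shifts it by one, advancing the one-hot counter from position $j$ to $j+1$ (i.e.\ $\sum \mapsto \sum + 1 \bmod n$), which verifies (i). For the memory block I would track two contributions: the old memory half is acted on by $I_{n-1}$ (a cyclic shift realizing the left-shift claimed in (ii)), while the counter half contributes through the off-diagonal block $I'_i$, whose left-right inversion is precisely what deposits the new bit $a_k = 1$ into the correct freshly-vacated memory slot. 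I would verify that these two contributions land in disjoint positions and together produce the cyclic left-shift of $(a_1, \ldots, a_{k-1}, 1, 0, \ldots, 0)$ claimed in (ii).

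The final sentence of the lemma --- that the $(n+1)$-st element of $\mathcal{V}_k$ equals $a_{\sum_{i=1}^k a_i}$ --- then follows by reading off position $n+1$, i.e.\ the first coordinate of the memory block: since the memory block is the left-shift of the padded assignment vector by $\sum a_i$, its leading entry is exactly $a_{\sum a_i}$ (with the convention handling the all-zero input). I would close by noting that substituting $k = n$ and then applying $\mathcal{A}_{n+1}$, which selects coordinate $n+1$, extracts precisely this value, so $\mathcal{T}$ computes HWB$_n$ as claimed.

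\textbf{The main obstacle} I anticipate is the bookkeeping in the $a_k = 1$ case of the memory block, specifically matching the left-right inversion defining $I'_i$ against the cyclic-shift action so that the new bit is written into the slot vacated by the shift rather than overwriting existing data. The dependence of $I'_i$ on the mode index $i$ (rather than on the current sum) is delicate: I would need to confirm that at step $i$ the off-diagonal placement is correctly synchronized with where the one-hot counter currently points, so that the bit $a_i$ is stored at a position consistent with the global left-shift convention in (ii). Getting the index alignment and the zero-padding boundary convention exactly right is where the routine matrix algebra becomes genuinely error-prone, and is the step I would write out most carefully.
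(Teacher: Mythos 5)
Your proposal follows essentially the same route as the paper's own proof: induction on $k$ with the base case $\mathcal{V}_0 = e_1$, a case split on $a_{k}\in\{0,1\}$, and tracking the counter half (right-shifted by $I_1$) and the memory half (left-shifted by $I_{n-1}$, with the new bit deposited via the off-diagonal block $I'_i$) separately. The paper's proof is no more detailed than your plan on the index bookkeeping you flag as the delicate step, so your proposal is at the same level of rigor and is correct.
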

\begin{proof}
This is shown by mathematical induction and holds for $k=0$. 
Assuming this holds for $k=i$, we divide the case according to $a_{i+1}$. 
When $a_{i+1}=0$, $\mathcal{V}_{i+1} = \mathcal{V}_i \mathcal{A}_{i+1}(\cdot, 0, \cdot) = \mathcal{V}_i$.
This satisfies (i) and (ii) because the number of $1$'s is the same as  $\mathcal{V}_i$ and $(a_1, \ldots, a_i, 0, \ldots,  0) = (a_1, \ldots, a_{i+1}, 0, \ldots, 0)$. 
When $a_{i+1}=1$, the first $n$ elements of $\mathcal{V}_{i+1}$ are obtained by circularly right-shifting the first $n$ elements of $\mathcal{V}_i$ by 1 by $\mathcal{V}_{i} \mathcal{A}_{i+1}(\cdot,1,\cdot)$ from the definition of $\mathcal{A}_{i+1}$. 
The remaining $n$ elements of $\mathcal{V}_{i+1}$ are made by cyclically left-shifting $(a_1, \ldots, a_{i+1}, 0, \ldots, 0)$ 
such that the first element corresponds to $a_{\sum a_i}$.
This vector is obtained from the remaining $n$ elements of $\mathcal{V}_{i}$ by setting the element corresponding to $x_{i+1}$ to 1 and circularly left-shifting the vector by 1.
This is also performed by $\mathcal{V}_i\mathcal{A}_{i+1}(\cdot, 1, \cdot)$.
These results demonstrate the validity of the subject through mathematical induction. 
\end{proof}
\begin{proof}[Proof of Proposition~\ref{prop:HWB_TT}]
By Lemma~\ref{lem:hwb-lemma}, the $(n+1)$-st element of $\mathcal{V}_n = \mathcal{A}_0\mathcal{A}_1(\cdot,a_1,\cdot)\cdots\mathcal{A}_n(\cdot,a_n,\cdot)$ corresponds to the $(\sum_{i=1}^n a_i)$-th value of the input; that is, it corresponds to HWB$_i(\mathbf{x})$. 
Multiplying $\mathcal{A}_{n+1}$ from the right corresponds to selecting the $(n+1)$-st element of $\mathcal{V}_n$, the tensor train representing HWB$_n$.
Therefore, $((\tilde{\mathcal{A}_1},\dots,\tilde{\mathcal{A}_n}), \pi)$ is a tensor train representing HWB$_n$ with $\mathcal{O}(n^2)$ size.
\end{proof}

Note that the above proofs hold even if the elements of $\mathcal{A}_i$ are limited to $\{-1,0,1\}$.

\subsection{Proof Sketch of Theorem~\ref{thm_operation}}

The proof of Theorem~\ref{thm_operation} is rather long, so we will only explain the proof sketch.
For details, please refer to the Appendix.

We can verify that, within the tensor representation in Definition~\ref{dfn:boolean-function-tensor}, every operation in Table~\ref{table:operation} can be reduced to basic tensor operations, such as the sum, Hadamard product, and inner product defined below.
\begin{dfn}
For two $d_1\times\dots\times d_m$ tensors $\mathcal{A},\mathcal{B}$, their sum $+$, Hadamard product $\circ$, and inner product $\cdot$ are defined as follows:
\begin{align*}(\mathcal{A}\!+\!\mathcal{B})(a_1,\dots,a_m) &= \mathcal{A}(a_1,\dots,a_m) + \mathcal{B}(a_1,\dots,a_m), \\
(\mathcal{A}\!\circ\!\mathcal{B})(a_1,\dots,a_m) &= \mathcal{A}(a_1,\dots,a_m) \mathcal{B}(a_1,\dots,a_m), \\
\mathcal{A}\cdot\mathcal{B} &= \smashoperator{\sum_{a_1,\dots,a_m}}\ \mathcal{A}(a_1,\dots,a_m) \mathcal{B}(a_1,\dots,a_m), 
\end{align*}
where $\mathcal{A}+\mathcal{B}$ and $\mathcal{A}\circ\mathcal{B}$ are $d_1\times\dots\times d_m$ tensors and $\mathcal{A}\cdot\mathcal{B}$ is a scalar value.
\end{dfn}

For example, we show that CT, the query of counting the number of models, is reduced to taking inner product with an all-one tensor and $\wedge$BC, the transformation of taking the logical conjunction of two Boolean function, is reduced to taking Hadamard product of two tensors.
It is also known that these operations can be performed efficiently even if the tensors are represented as tensor trains of the same mode order~\cite{oseledets2011tensor}.

However, to perform tensor operations, the modes of the tensors must be aligned.
Within tensor train representations, this implies that we should align their mappings $\pi$.
Therefore, we prepare the method used to align the mappings.
In addition, we show that clauses and terms can be represented by a constant-rank tensor train.
Finally, by aligning the mappings of two tensor train representations, we show that the operations listed in Table~\ref{table:operation}, except for $\wedge$C, $\vee$C, and FO, can be reduced to the combinations of sum, Hadamard product, and inner product.

\subsection{Proof of Theorem~\ref{thm_nonpoly_operations}}
In this section, we prove that $\wedge$C, $\vee$C, and FO cannot be performed in polynomial time on tensor trains.

\subsubsection{$\wedge$C, $\vee$C:}
Assume that $\wedge$C can be performed in polynomial time on a tensor train.
Let $f=(x_1\vee \neg y_1)\wedge(\neg x_1\vee y_1)\wedge(x_2\vee \neg y_2)\wedge(\neg x_2\vee y_2)\wedge\dots\wedge(x_n\vee \neg y_n)\wedge(\neg x_n\vee y_n)$ and consider representing $f$ by a tensor train with order $x_1<\dots <x_n<y_1<\dots<y_n$.
Because each clause can be represented by a linear size tensor train, $f$ can also be represented by a polynomial size (polysize) tensor train based on this assumption.

Let $A(x_1,\dots, x_n;y_1,\dots,y_n)$ be the unfolding matrix of the tensor representation of $f$; that is, $(x_1, \dots, x_n)$ corresponds to the row, and $(y_1, \dots, y_n)$ corresponds to the column.
The rank of this matrix is the lower bound of the tensor train rank~\cite{oseledets2011tensor}.
Because $f$ outputs $1$ if and only if $(x_1,\dots,x_n)=(y_1,\dots,y_n)$, $A(x_1,\dots, x_n;y_1,\dots,y_n)$ is an identity matrix; thus, its rank is $2^n$.
Therefore, the rank of the tensor train is also $\Omega(2^n)$, contradicting that $f$ can be represented by a polynomial size tensor train.

By considering $\neg f$, we can similarly prove $\vee$C with the power of De Morgan's law.

\subsubsection{FO:}
Assume that FO can be performed in polynomial time on tensor trains.
Let $f=(x_1\vee \neg y_1)\wedge(\neg x_1\vee y_1)\wedge(x_2\vee \neg y_2)\wedge(\neg x_2\vee y_2)\wedge\dots\wedge(x_n\vee \neg y_n)\wedge(\neg x_n\vee y_n)$.
Let $C_i$ be the $i$-th clause of $f$ and $h=(z_1\wedge \neg C_1)\vee(\neg z_1\wedge z_2\wedge \neg C_2)\vee\dots(\neg z_1\wedge\dots\neg z_{2n-1} \wedge z_{2n}\wedge\neg C_{2n})$.
$h$ can be represented by a polysize OBDD with order $z_1<\dots <z_{2n}<x_1<\dots<x_n< y_1<\dots<y_n$, which has nodes $v_1,\dots,v_{2n}$ whose labels are $\textsf{var}(v_i)=z_i$ for $1\leq i\leq 2n$, $\textsf{lo}(v_i)=v_{i+1}$ for $1\leq i\leq 2n-1$, $\textsf{lo}(v_{2n})=\bot$.
The substructure below $\textsf{hi}(v_{i})$ represents $C_i$ for $1\leq i\leq 2n$.
As every $C_i$ can be represented by a polysize OBDD, the overall OBDD representing $h$ is also polysized.
According to Theorem~\ref{thm_succinct}, $h$ can be represented by a polysize tensor train.
In addition, by the definition of forgetting, we have $\exists\{z_1,\ldots,z_{2n}\}.h=\neg C_1\vee\dots\vee\neg C_{2n}=\neg f$.
By assumption, we obtain a polysize tensor train representing $\neg f=\neg C_1\vee\dots\vee\neg C_{2n}.$
Because $\neg$C is also tractable on tensor trains, we obtain a polysize tensor train representing $f$.
However, this contradicts the lower bound of the rank of $f$ mentioned in the above proof regarding $\wedge$C.
Therefore, we cannot perform FO in polynomial time on tensor trains.

\section{Discussion}
Our theoretical results demonstrate that the power of the tensor trains is equivalent to that of OBDDs. However, their limitations persist as representations of Boolean functions. For instance, a bounded conjunction is performed using the Hadamard product of tensor trains and the rank of the output tensor always becomes the product of the ranks of the input tensor trains, whereas the output OBDDs tend to be small in practice. Future work on TT-based representations will include finding a way to perform practically efficient operations.

In this study, we compared the succinctness of TT using OBDD.
It is difficult to find succinctness between the TT and the other representations if we apply standard strategies. 
It is evident that no encoding from the NNF subsets exists that does not support the same set of queries (e.g., DNNF) to TT. 
SDD~\cite{darwiche2011sdd} is the only representation supporting the same set of polytime operations and it is expected that TT will not be more succinct than SDD because it has a more flexible structure for variables than OBDD and TT. 
Conversely, the NNF subsets are not expected to be more succinct than TT because they cannot represent +1,-1 weights. 

Raw tensor representation supports parallelism. Similarly, given that a tensor train comprises multiple tensors, we can perform calculations in parallel, as demonstrated in \cite{yang2017tensor}, thereby benefiting from GPU acceleration. One advantage of tensor-based parallel processing is its independence from variable linear ordering, which eliminates the need to consider order.

\section{Conclusion}
This study examined the succinctness and richness of the operations of the tensor train as a representation of Boolean functions by comparing them with the OBDD. 
We proved that the tensor train is more succinct than the OBDD and can efficiently perform every operation that the OBDD can perform in polytime. These results suggest that
the tensor decomposition is a powerful representation of Boolean functions.
Because encoding from OBDD to TT exists, compilation for OBDD can also be used for TT.
However, it is yet to be determined whether a better compilation method exists.
The relationship between tensor trains and SDD or other knowledge representations is a topic for future research.

\section{Acknowledgements}
The authors thank the anonymous reviewers for their valuable feedback, corrections, and suggestions. This work was supported by JST CREST (Grant Number JPMJCR22D3, Japan) and JSPS KAKENHI (Grant Number JP20H05963, Japan).

\bibliography{aaai25}

\begin{thebibliography}{30}
\providecommand{\natexlab}[1]{#1}

\bibitem[{Bollig et~al.(1999)Bollig, L{\"o}bbing, Sauerhoff, and Wegener}]{bddhwbcomplexity}
Bollig, B.; L{\"o}bbing, M.; Sauerhoff, M.; and Wegener, I. 1999.
\newblock On the complexity of the hidden weighted bit function for various {BDD} models.
\newblock \emph{RAIRO - Theoretical Informatics and Applications - Informatique Th{\'e}orique et Applications}, 33(2): 103--115.

\bibitem[{Bollig and Pr{\"o}ger(2012)}]{bollig2012efficient}
Bollig, B.; and Pr{\"o}ger, T. 2012.
\newblock An efficient implicit OBDD-based algorithm for maximal matchings.
\newblock In \emph{Proceedings of the 6th International Conference on Language and Automata Theory and Applications}, 143--154. Springer.

\bibitem[{Bova(2016)}]{bova2016sdds}
Bova, S. 2016.
\newblock {SDD}s are exponentially more succinct than {OBDD}s.
\newblock In \emph{Proceedings of the 30th AAAI Conference on Artificial Intelligence}, 929--935.

\bibitem[{Bryant(1986)}]{bryant1986graph}
Bryant, R.~E. 1986.
\newblock Graph-based algorithms for {Boolean} function manipulation.
\newblock \emph{IEEE Transactions on Computers}, C-35(8): 677--691.

\bibitem[{Bryant(1991)}]{bryant1991complexity}
Bryant, R.~E. 1991.
\newblock On the complexity of {VLSI} implementations and graph representations of {Boolean} functions with application to integer multiplication.
\newblock \emph{IEEE Transactions on Computers}, 40(2): 205--213.

\bibitem[{Carroll and Chang(1970)}]{carroll70}
Carroll, J.~D.; and Chang, J.-J. 1970.
\newblock Analysis of individual differences in multidimensional scaling via an n-way generalization of ``{Eckart}-{Young}'' decomposition.
\newblock \emph{Psychometrika}, 35: 283--319.

\bibitem[{Chavira and Darwiche(2008)}]{CHAVIRA2008772}
Chavira, M.; and Darwiche, A. 2008.
\newblock On probabilistic inference by weighted model counting.
\newblock \emph{Artificial Intelligence}, 172(6-7): 772--799.

\bibitem[{Choi, Vergari, and Van~den Broeck(2020)}]{choi2020probabilistic}
Choi, Y.; Vergari, A.; and Van~den Broeck, G. 2020.
\newblock Probabilistic circuits: a unifying framework for tractable probabilistic models.
\newblock Available at: http://starai.cs.ucla.edu/papers/ProbCirc20.pdf.

\bibitem[{Darwiche(1999)}]{darwiche1999compiling}
Darwiche, A. 1999.
\newblock Compiling knowledge into decomposable negation normal form.
\newblock In \emph{Proceedings of the 16th International Joint Conference on Artifical Intelligence}, 284--289.

\bibitem[{Darwiche(2001{\natexlab{a}})}]{darwiche-dnnf2001}
Darwiche, A. 2001{\natexlab{a}}.
\newblock Decomposable negation normal form.
\newblock \emph{Journal of The ACM}, 48(4): 608–647.

\bibitem[{Darwiche(2001{\natexlab{b}})}]{Darwiche2001ddnnf}
Darwiche, A. 2001{\natexlab{b}}.
\newblock On the tractable counting of theory models and its application to truth maintenance and belief revision.
\newblock \emph{Journal of Applied Non-Classical Logics}, 11(1-2): 11--34.

\bibitem[{Darwiche(2011)}]{darwiche2011sdd}
Darwiche, A. 2011.
\newblock {SDD}: A new canonical representation of propositional knowledge bases.
\newblock In \emph{Proceedings of the 22nd International Joint Conference on Artificial Intelligence}, 819--826.

\bibitem[{Darwiche(2014)}]{darwiche_2014}
Darwiche, A. 2014.
\newblock \emph{Tractable Knowledge Representation Formalisms}, 141–172.
\newblock Cambridge University Press.

\bibitem[{Darwiche and Marquis(2002)}]{knowledge}
Darwiche, A.; and Marquis, P. 2002.
\newblock A knowledge compilation map.
\newblock \emph{Journal of Artificial Intelligence Research}, 17(1): 229--264.

\bibitem[{Dudek and Vardi(2020)}]{dudek2020parallel}
Dudek, J.~M.; and Vardi, M.~Y. 2020.
\newblock Parallel weighted model counting with tensor networks.
\newblock arXiv:2006.15512.

\bibitem[{Fannes, Nachtergaele, and Werner(1992)}]{fannes92}
Fannes, M.; Nachtergaele, B.; and Werner, R.~F. 1992.
\newblock Finitely correlated states on quantum spin chains.
\newblock \emph{Communications in Mathematical Physics}, 144: 443--490.

\bibitem[{Fargier and Marquis(2008)}]{kcmap_extension1}
Fargier, H.; and Marquis, P. 2008.
\newblock Extending the knowledge compilation map: Krom, Horn, affine and beyond.
\newblock In \emph{Proceedings of the 23rd National Conference on Artificial Intelligence}, volume~1, 442--447.

\bibitem[{Fargier, Marquis, and Niveau(2013)}]{FargierMN13}
Fargier, H.; Marquis, P.; and Niveau, A. 2013.
\newblock Towards a knowledge compilation map for heterogeneous representation languages.
\newblock In \emph{In Proceedings of the 23rd International Joint Conference on Artificial Intelligence}, 877--883.

\bibitem[{Hackbusch and K\"{u}hn(2009)}]{hackbusck09}
Hackbusch, W.; and K\"{u}hn, S. 2009.
\newblock A new scheme for the tensor representation.
\newblock \emph{Journal of Fourier Analysis and Applications}, 15: 706--722.

\bibitem[{Hong et~al.(2022)Hong, Zhou, Li, Feng, and Ying}]{hong2022tensor}
Hong, X.; Zhou, X.; Li, S.; Feng, Y.; and Ying, M. 2022.
\newblock A tensor network based decision diagram for representation of quantum circuits.
\newblock \emph{ACM Transactions on Design Automation of Electronic Systems (TODAES)}, 27(6): 1--30.

\bibitem[{Huang and Darwiche(2004)}]{huang2004using}
Huang, J.; and Darwiche, A. 2004.
\newblock Using DPLL for efficient OBDD construction.
\newblock In \emph{Proceedings of the 7th International Conference on Theory and Applications of Satisfiability Testing}, 157--172.

\bibitem[{Khrulkov, Novikov, and Oseledets(2018)}]{khrulkov2018expressive}
Khrulkov, V.; Novikov, A.; and Oseledets, I. 2018.
\newblock Expressive power of recurrent neural networks.
\newblock In \emph{In Proceedings of the 6th International Conference on Learning Representations}.

\bibitem[{Knuth(2011)}]{knuth2011art}
Knuth, D.~E. 2011.
\newblock \emph{The Art of Computer Programming, Volume 4A: Combinatorial Algorithms, Part 1}.
\newblock Addison-Wesley Professional.

\bibitem[{Kolda and Bader(2009)}]{kolda09}
Kolda, T.~G.; and Bader, B.~W. 2009.
\newblock Tensor decompositions and applications.
\newblock \emph{SIAM Review}, 51(3): 455--500.

\bibitem[{Manhaeve et~al.(2018)Manhaeve, Dumancic, Kimmig, Demeester, and De~Raedt}]{NEURIPS2018_dc5d637e}
Manhaeve, R.; Dumancic, S.; Kimmig, A.; Demeester, T.; and De~Raedt, L. 2018.
\newblock {DeepProbLog}: neural probabilistic logic programming.
\newblock In \emph{Proceedings of Advances in Neural Information Processing Systems 31 (NeurIPS 2018)}.

\bibitem[{Niemann et~al.(2016)Niemann, Wille, Miller, Thornton, and Drechsler}]{niemann2015qmdds}
Niemann, P.; Wille, R.; Miller, D.~M.; Thornton, M.~A.; and Drechsler, R. 2016.
\newblock {QMDD}s: efficient quantum function representation and manipulation.
\newblock \emph{IEEE Transactions on Computer-Aided Design of Integrated Circuits and Systems}, 35(1): 86--99.

\bibitem[{Oseledets(2011)}]{oseledets2011tensor}
Oseledets, I.~V. 2011.
\newblock Tensor-train decomposition.
\newblock \emph{SIAM Journal on Scientific Computing}, 33(5): 2295--2317.

\bibitem[{Tucker(1966)}]{tucker66}
Tucker, L.~R. 1966.
\newblock Some mathematical notes on three-mode factor analysis.
\newblock \emph{Psychometrika}, 31: 279--311.

\bibitem[{Xu et~al.(2018)Xu, Zhang, Friedman, Liang, and Van~den Broeck}]{pmlr-v80-xu18h}
Xu, J.; Zhang, Z.; Friedman, T.; Liang, Y.; and Van~den Broeck, G. 2018.
\newblock A semantic loss function for deep learning with symbolic knowledge.
\newblock In \emph{Proceedings of the 35th International Conference on Machine Learning}, volume~80 of \emph{PMLR}, 5502--5511.

\bibitem[{Yang, Krompass, and Tresp(2017)}]{yang2017tensor}
Yang, Y.; Krompass, D.; and Tresp, V. 2017.
\newblock Tensor-train recurrent neural networks for video classification.
\newblock In \emph{Proceedings of the 34th International Conference on Machine Learning}, 3891--3900.

\end{thebibliography}

\appendix
\section{Appendix: Full Proofs}
In this appendix, we show missing proofs.
First, we show the proof of Lemma~\ref{lem:lsbdd-equiv-tt} that is used in the proof of Theorem~\ref{thm_succinct}.
Then, we show the full proofs of Theorems~\ref{thm_operation} and \ref{thm_nonpoly_operations}.

\subsection{Proof of Lemma~\ref{lem:lsbdd-equiv-tt}}
\begin{proof}
We use induction to prove the equivalence between LSBDD and TT.
First, we prove the base step: we show that the node $v_{m,j}$ in LSBDD and the $j$-th vector $(\mathcal{A}_m(j, 0, 0), \mathcal{A}_m(j, 1, 0))^\top$ in $\mathcal{A}_m$ represents the same Boolean function.
A total of four possible pairs of \textsf{LO} and \textsf{HI} child nodes for non-terminal nodes in $S_m$ exist: $(\top, \top)$, $(\top, \bot)$, $(\bot, \top)$, and $(\bot, \bot)$, each of which corresponds to Boolean functions $1, \neg x_{\pi(m)}, x_{\pi(m)}$, and $0$, where $1$ and $0$ represent identity functions evaluated as $1$ and $0$, respectively.
According to the definition of $\mathcal{A}_m$, the vectors corresponding to four different non-terminal nodes $(\top, \top)$, $(\top, \bot)$, $(\bot, \top)$, and $(\bot, \bot)$ are $(1, 1)^\top, (1, 0)^\top, (0, 1)^\top$, and $(0, 0)^\top$.
Here, they represent Boolean functions $1, \neg x_{\pi(m)}, x_{\pi(m)}$, and $0$.
Therefore, the base step holds.

Next, we show the inductive step.
Let $\mathcal{A}_{i, j}$ be a $1 \times 2 \times r_{i+1}$ tensor satisfying $\mathcal{A}_{i, j}(0, b, k) = \mathcal{A}_{i}(j, b, k)$ for every $b \in \{0, 1\}$ and $0 \leq k < r_{i+1}$, where $r_{m+1} = 1$ corresponding to the definition of tensor train.
As an inductive step, we assume that, for any $0\leq j< r_{i+1}$, the pair of $\mathcal{A}_{i+1,j},\mathcal{A}_{i+2},\ldots,\mathcal{A}_{m}$ and $\pi$ is a tensor train representation of the Boolean function that the LSBDD node $v_{i+1, j}$ represents.
We then show that, for any $0\leq j<r_i$, the pair of $\mathcal{A}_{i, j}, \mathcal{A}_{i+1},\ldots,\mathcal{A}_m$ and $\pi$ is a tensor train representation of the Boolean function that $v_{i, j}$ represents.
Let $\mathcal{T}_{i, j} = \mathcal{A}_{i, j} \mathcal{A}_{i+1}\cdots \mathcal{A}_m$.
From the assumption, every $\mathcal{T}_{i+1, j}$ represents a Boolean function corresponding to $v_{i+1, j}$.
Then, from the definition of $\mathcal{A}_i$, $\mathcal{T}_{i,j}$ can be described as follows:
\begin{align*}
\mathcal{T}_{i, j}(0, a_{i+1}, \ldots, a_{m})&= \mathcal{T}_{i+1, l}(a_{i+1}, \ldots, a_m), \\
\mathcal{T}_{i, j}(1, a_{i+1}, \ldots, a_{m})&= \mathcal{T}_{i+1, h}(a_{i+1}, \ldots, a_m) ,
\end{align*}
where $l$ (resp. $h$) satisfies $v_{i+1, l} = \textsf{lo}(v_{i, j})$ (resp. $v_{i+1, h} = \textsf{hi}(v_{i, j})$. Therefore, $\mathcal{T}_{i, j}$ is the tensor representation of the Boolean function equivalent to $v_{i, j}$.

By the principle of mathematical induction, the pair of $\mathcal{A}_1,\ldots,\mathcal{A}_m$ and $\pi$ is a tensor train representation of the Boolean function that the root node of LSBDD represents.
This proves Lemma~\ref{lem:lsbdd-equiv-tt}.
\end{proof}

\subsection{Proof of Theorem~\ref{thm_operation}}

\begin{lem}
\label{lem:expand-tt}
    Let $((\mathcal{A}_1, \ldots, \mathcal{A}_m), \pi)$ be a tensor train representation of an $n$-ary Boolean function, where $m < n$. Then the pair $((\mathcal{A}_1, \ldots, \mathcal{A}_{k}, \mathcal{I}_k, \mathcal{A}_{k+1}, \ldots, \mathcal{A}_{m}), \pi^\prime)$ of a tensor train and a mapping for $0 \leq k \leq m$ is a tensor train representation corresponding to the same Boolean function, where $\mathcal{I}_k$ is $r_{k+1} \times 2 \times r_{k+1}$ tensor defined as $\mathcal{I}_k(\cdot, 0, \cdot) = \mathcal{I}_k(\cdot, 1, \cdot) = I$ and $\pi^\prime: [m+1] \to [n]$ is an injection satisfying $\pi^\prime(i) = \pi(i)$ if $i < k$ and $\pi^\prime(i) = \pi(i+1)$ if $i > k$.
\end{lem}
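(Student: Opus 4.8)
The plan is to verify the claim directly from the matrix-product form of the tensor train in Eq.~\eqref{eq:tensor-train-products}, exploiting the fact that $\mathcal{I}_k$ is engineered to act as a no-op in the mode product. First I would recall that for any assignment the value of the original tensor factors as $\mathcal{T}(a_1,\ldots,a_m) = \mathcal{A}_1(\cdot,a_1,\cdot)\cdots\mathcal{A}_m(\cdot,a_m,\cdot)$, a product of matrix slices whose intermediate dimensions are the ranks $r_1=1,r_2,\ldots,r_{m+1}=1$. The new tensor train inserts the slice $\mathcal{I}_k(\cdot,b,\cdot)$ between $\mathcal{A}_k(\cdot,a_k,\cdot)$ and $\mathcal{A}_{k+1}(\cdot,a_{k+1},\cdot)$, and since $\mathcal{I}_k$ is $r_{k+1}\times 2\times r_{k+1}$ with $\mathcal{I}_k(\cdot,0,\cdot)=\mathcal{I}_k(\cdot,1,\cdot)=I$, this inserted slice is the $r_{k+1}\times r_{k+1}$ identity matrix regardless of the value $b\in\{0,1\}$ assigned to the new mode.

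Next I would evaluate the new tensor $\mathcal{T}' = \mathcal{A}_1\cdots\mathcal{A}_k\,\mathcal{I}_k\,\mathcal{A}_{k+1}\cdots\mathcal{A}_m$. By Eq.~\eqref{eq:tensor-train-products} applied to the extended sequence, the slice product computing $\mathcal{T}'(a_1,\ldots,a_k,b,a_{k+1},\ldots,a_m)$ is exactly the slice product computing $\mathcal{T}(a_1,\ldots,a_m)$ with one extra factor $\mathcal{I}_k(\cdot,b,\cdot)=I$ spliced in at the $(k+1)$-st position. Because the intermediate dimension at the insertion point is precisely $r_{k+1}$, this identity factor drops out of the product, so $\mathcal{T}'$ is independent of its $(k+1)$-st mode and agrees with $\mathcal{T}$ on all remaining modes for every $b$. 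I would note that the boundary cases $k=0$ and $k=m$ are covered as well: there the relevant rank is $r_1=r_{m+1}=1$, so $\mathcal{I}_k$ degenerates to the scalar $1$, which is still a no-op.

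Finally I would translate this tensor identity back to Boolean functions via Definition~\ref{dfn:boolean-function-tensor}. Since $m<n$, the set $Q=\{\pi(j)\mid j\in[m]\}$ omits at least one variable, so $\pi'$ can send the new position $k+1$ to a variable $x_{\pi'(k+1)}$ outside $Q$; under the original pair $(\mathcal{T},\pi)$ the function $f$ is already independent of that variable, which is consistent with the fact just shown that $\mathcal{T}'$ ignores its $(k+1)$-st mode. For every other position $\pi'$ reproduces $\pi$, so each remaining variable is read off in the same way in both representations, and both pairs therefore define the same $n$-ary Boolean function $f$. As a last check, the inserted tensor $\mathcal{I}_k$ has entries in $\{0,1\}\subseteq\{-1,0,1\}$, so the extended sequence remains a valid (ternary) tensor train representation. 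I expect the only delicate points to be the index bookkeeping for $\pi'$ and confirming that the identity matrix has the matching dimension $r_{k+1}$ on both sides; the algebraic heart of the argument, namely that a value-independent identity slice contributes nothing to the mode product, is immediate.
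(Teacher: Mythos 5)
Your proposal is correct and follows essentially the same route as the paper's own proof: both evaluate the mode product via Eq.~\eqref{eq:tensor-train-products}, observe that the inserted slice $\mathcal{I}_k(\cdot,b,\cdot)$ is the $r_{k+1}\times r_{k+1}$ identity and hence drops out for either value of $b$, and conclude via Definition~\ref{dfn:boolean-function-tensor} that the two pairs represent the same Boolean function. Your extra checks of the boundary cases $k=0,m$ and of the ternary-entry requirement are harmless additions the paper leaves implicit.
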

\begin{proof}
Let $\mathcal{T} = \mathcal{A}_1 \cdots \mathcal{A}_m$ and $\mathcal{T}^\prime = \mathcal{A}_1 \cdots \mathcal{I}_k \cdots \mathcal{A}_{m}$. 
Because $\mathcal{A}_k(\cdot, a_k, \cdot) \mathcal{I}_k(\cdot, b, \cdot) \mathcal{A}_{k+1}(\cdot, a_{k+1}, \cdot)$ equals $\mathcal{A}_k(\cdot, a_k, \cdot)  \mathcal{A}_{k+1}(\cdot, a_{k+1}, \cdot)$ for any $(a_{k}, a_{k+1}, b) \in \{0, 1\}^3$, $\mathcal{T}(a_1, \ldots, a_m) = \mathcal{T}^\prime(a_1, \ldots, a_k, b, a_{k+1}, \ldots, a_m)$ for any $(a_1, \ldots, a_m) \in \{0, 1\}^m$ and $b\in\{0,1\}$.
Thus, $(\mathcal{T}^\prime, \pi^\prime)$ represents the same Boolean function with $(\mathcal{T}, \pi)$.
\end{proof}
Using the above lemma, we show that the two tensor trains $(\mathcal{A}, \pi)$ and $(\mathcal{B}, \rho)$ following the same variable order can be translated into equivalent tensor trains $(\mathcal{A}^\prime, \tau)$ and $(\mathcal{B}^\prime, \tau)$ having the same number of modes and identical mapping $\tau$.
\begin{lem}
\label{lem:align-tt}
Let $((\mathcal{A}_1, \ldots, \mathcal{A}_m), \pi)$ and $((\mathcal{B}_1, \ldots, \mathcal{B}_\ell), \rho)$ be tensor train representations of $n$-ary Boolean functions $f$ and $g$, where $\pi: [m] \to [n]$ and $\rho: [\ell] \to [n]$ are injections that follow the same variable order. Then, tensor train representations $((\mathcal{A}^\prime_1, \ldots, \mathcal{A}^\prime_k), \tau)$ and $((\mathcal{B}^\prime_1, \ldots, \mathcal{B}^\prime_k), \tau)$ of $f$ and $g$ exist. Moreover, the size of $\mathcal{A}^\prime_1, \ldots, \mathcal{A}^\prime_k$ and $\mathcal{B}^\prime_1, \ldots, \mathcal{B}^\prime_k$ are bounded by a polynomial of the sizes of $\mathcal{A}_1, \ldots, \mathcal{A}_m$ and $\mathcal{B}_1, \ldots, \mathcal{B}_\ell$.
The time complexity is $\mathcal{O}((m+\ell)(r^2+q^2))$.
\end{lem}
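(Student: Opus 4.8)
The plan is to build $\tau$ as the merge of the two variable orderings and then pad each tensor train with identity modes, via Lemma~\ref{lem:expand-tt}, so that both come to cover exactly the variables indexed by $\tau$. First I would set $Q = \{\pi(i) \mid i \in [m]\} \cup \{\rho(j) \mid j \in [\ell]\}$, the union of the variable indices appearing in the two tensors, put $k = |Q|$, and let $\tau \colon [k] \to [n]$ enumerate $Q$ in increasing order with respect to $<$. Since $\pi$ and $\rho$ both follow $<$, each already lists its own variables in the order induced by $\tau$; thus $\tau$ is the common refinement, and the image of $\pi$ (resp.\ $\rho$) is an ordered sub-sequence of the image of $\tau$.

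Next I would transform $((\mathcal{A}_1, \ldots, \mathcal{A}_m), \pi)$ into $((\mathcal{A}^\prime_1, \ldots, \mathcal{A}^\prime_k), \tau)$ one missing variable at a time. For each index $c \in Q \setminus \{\pi(i) \mid i \in [m]\}$ I apply Lemma~\ref{lem:expand-tt} with the insertion position chosen so that the new mode lands exactly where $c$ sits among the already-present indices in the order $<$ (using position $0$ when $c$ precedes all current variables and a position equal to the current number of modes when it follows them all). Because Lemma~\ref{lem:expand-tt} guarantees that splicing in the all-identity tensor $\mathcal{I}$ leaves the represented Boolean function unchanged and updates the mapping to account for the new mode, after processing every such $c$ I obtain a tensor train on $k$ modes whose mapping is exactly $\tau$ and which still represents $f$. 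Performing the symmetric procedure on $((\mathcal{B}_1, \ldots, \mathcal{B}_\ell), \rho)$ yields $((\mathcal{B}^\prime_1, \ldots, \mathcal{B}^\prime_k), \tau)$ representing $g$, so both now have the same number of modes $k$ and identical mapping $\tau$.

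For the bounds, note that each application of Lemma~\ref{lem:expand-tt} inserts a tensor whose two slices equal the identity at the adjacent bond, so the bond dimensions, and hence the rank, are unchanged; the added tensor has size $2r_{j}^2 \le 2r^2$ for $f$ and $\le 2q^2$ for $g$. I perform $k - m \le \ell$ insertions for $f$ and $k - \ell \le m$ for $g$, so the total size increase is $\mathcal{O}(\ell r^2 + m q^2)$, which is polynomial in the original sizes, and the overall running time of constructing and splicing in the identity tensors is $\mathcal{O}((m+\ell)(r^2 + q^2))$.

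The main obstacle I anticipate is the positional bookkeeping: I must argue that the insertion index fed to Lemma~\ref{lem:expand-tt} at each step really places $c$ in its correct slot relative to $<$, even as earlier insertions shift the indices of the existing modes. This is essentially the correctness of merging two sorted sequences while tracking index offsets, and it is rendered routine by the fact that $\pi$ and $\rho$ already follow the common order $<$; the only care needed is to update the current mode count after each insertion so that the boundary cases ($c$ smallest or largest) select the first or last position. Everything else — preservation of the Boolean function and of the rank — is delivered directly by Lemma~\ref{lem:expand-tt}.
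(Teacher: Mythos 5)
Your proposal is correct and follows essentially the same route as the paper's proof: both merge the two variable index sets into a common mapping $\tau$ and repeatedly apply Lemma~\ref{lem:expand-tt} to splice identity tensors into each train at the position dictated by the shared order $<$, yielding the same $\mathcal{O}(\ell r^2 + mq^2)$ size increase and $\mathcal{O}((m+\ell)(r^2+q^2))$ time bound. The extra care you take with the index bookkeeping and boundary positions is a harmless elaboration of what the paper handles by choosing $w$ as the largest index with $\pi(w) < j$.
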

\begin{proof}
For simplicity, we assume that the tensor trains follow variable order $x_1 < \cdots < x_n$.
We show a procedure that transforms $((\mathcal{A}_1, \ldots, \mathcal{A}_m), \pi)$ into $((\mathcal{A}^\prime_1, \ldots, \mathcal{A}^\prime_k), \tau)$.
Let $j \in [n]$ be an index satisfying $\exists i: \rho(i) = j$ and $\forall i: \pi(i) \neq j$.
By following Lemma~\ref{lem:expand-tt}, we extend the tensor train representation by inserting $\mathcal{I}_w$ at position $w$, where $w$ is the largest integer satisfying $\pi(w) < j$.
We then update $\pi$ to $\pi^\prime$ following the procedure in Lemma~\ref{lem:expand-tt}, while setting $\pi^\prime(w) = j$.
We repeat the procedure every $j$ to obtain
$((\mathcal{A}^\prime_1, \ldots, \mathcal{A}^\prime_k), \tau)$.
A similar process yields $((\mathcal{B}^\prime_1, \ldots, \mathcal{B}^\prime_k), \tau)$.

We show that the size of $\mathcal{A}^\prime_1, \ldots, \mathcal{A}^\prime_k$ is bounded by a polynomial of the sizes of $\mathcal{A}_1, \ldots, \mathcal{A}_m$ and $\mathcal{B}_1, \ldots, \mathcal{B}_\ell$. Because the output is obtained by inserting tensors $\mathcal{I}$ at most $\ell$ times, the increase in the size of the output tensor is bounded by $\ell r^2$, where $r$ is the rank of $\mathcal{A}_1,\ldots,\mathcal{A}_m$. Similarly, the size of $\mathcal{B}^\prime_1, \ldots, \mathcal{B}^\prime_k$ is bounded by the size of $\mathcal{B}_1,\ldots,\mathcal{B}_\ell$ plus $mq^2$, where $q$ is the rank of $\mathcal{B}_1,\ldots,\mathcal{B}_\ell$.
The time complexity is $\mathcal{O}((m+l)(r^2+q^2))$ because $k\leq m+l$ and the number of elements of output tensor trains is $\mathcal{O}(k(r^2+q^2))$.
\end{proof}

Note that, after the alignment, the number of modes $k$ is at most $m+\ell$.
Moreover, the ranks of the tensor train representations remain the same.

For two tensor train representations $((\mathcal{A}_1,\ldots,\mathcal{A}_m),\pi)$ and $((\mathcal{B}_1,\ldots,\mathcal{B}_\ell),\rho)$, we can consider their sum, Hadamard product, and inner product after aligning their mappings by Lemma~\ref{lem:align-tt}.
Such operations can be performed efficiently within tensor train representations.
\begin{lem}
  \label{lem:basic-operations-tt}
  Let $m,\ell$ be the number of modes of $\mathcal{A}=\mathcal{A}_1\cdots\mathcal{A}_m$ and $\mathcal{B}=\mathcal{B}_1\cdots\mathcal{B}_\ell$, and $r,q$ be the tensor train ranks of $\mathcal{A}$ and $\mathcal{B}$.
  Then, after the alignment of mappings by Lemma~\ref{lem:align-tt} to obtain $\mathcal{A}^\prime=\mathcal{A}^\prime_1\cdots\mathcal{A}^\prime_k$ and $\mathcal{B}^\prime=\mathcal{B}^\prime_1\cdots\mathcal{B}^\prime_k$, the tensor train representations of their sum and Hadamard product can be computed in $\mathcal{O}((m+\ell)(r+q)^2)$ and $\mathcal{O}((m+\ell)r^2q^2)$ times.
  Their inner product can be computed in $\mathcal{O}((m+\ell)rq(r+q))$ time.
\end{lem}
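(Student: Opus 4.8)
The plan is to treat each of the three operations as a slice-wise manipulation of the aligned cores $\mathcal{A}^\prime_1, \ldots, \mathcal{A}^\prime_k$ and $\mathcal{B}^\prime_1, \ldots, \mathcal{B}^\prime_k$, exhibit either an explicit output tensor train (for the sum and Hadamard product) or an explicit contraction schedule (for the inner product), and then read off the rank and running time. Throughout I write $A_i(a) \coloneqq \mathcal{A}^\prime_i(\cdot, a, \cdot)$ and $B_i(a) \coloneqq \mathcal{B}^\prime_i(\cdot, a, \cdot)$ for the $r_i \times r_{i+1}$ and $q_i \times q_{i+1}$ slice matrices, so that by Eq.~\eqref{eq:tensor-train-products} the entries factor as $\mathcal{A}^\prime(a_1, \ldots, a_k) = A_1(a_1)\cdots A_k(a_k)$ and likewise for $\mathcal{B}^\prime$. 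Since Lemma~\ref{lem:align-tt} guarantees $k \le m+\ell$ and leaves the ranks $r, q$ unchanged, it suffices to bound the per-core size or per-step cost by a polynomial in $r$ and $q$.

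For the sum I would use the block-diagonal construction, setting each interior core to $C_i(a) = \mathrm{diag}(A_i(a), B_i(a))$ and handling the scalar boundaries $r_1 = q_1 = r_{k+1} = q_{k+1} = 1$ by concatenating $A_1(a_1)$ with $B_1(a_1)$ into a single row and stacking $A_k(a_k)$ above $B_k(a_k)$ into a single column. The block structure is preserved under matrix multiplication, so $C_1(a_1)\cdots C_k(a_k) = A_1(a_1)\cdots A_k(a_k) + B_1(a_1)\cdots B_k(a_k) = (\mathcal{A}^\prime + \mathcal{B}^\prime)(a_1,\ldots,a_k)$; each core has rank at most $r+q$ and size at most $2(r+q)^2$, giving the total $\mathcal{O}((m+\ell)(r+q)^2)$. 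For the Hadamard product I would instead take Kronecker products of slices, $C_i(a) = A_i(a) \otimes B_i(a)$. The mixed-product identity $(A_i(a_i)\otimes B_i(a_i))(A_{i+1}(a_{i+1})\otimes B_{i+1}(a_{i+1})) = (A_i(a_i)A_{i+1}(a_{i+1})) \otimes (B_i(a_i)B_{i+1}(a_{i+1}))$ telescopes to $C_1(a_1)\cdots C_k(a_k) = (A_1(a_1)\cdots A_k(a_k)) \otimes (B_1(a_1)\cdots B_k(a_k))$, and since both factors are scalars the Kronecker product is ordinary multiplication, i.e.\ exactly $(\mathcal{A}^\prime \circ \mathcal{B}^\prime)(a_1,\ldots,a_k)$. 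Here each core has rank at most $rq$ and size $\mathcal{O}(r^2q^2)$, for a total of $\mathcal{O}((m+\ell)r^2q^2)$.

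For the inner product I would avoid forming a product tensor train and instead sweep a running matrix. Initializing $M_0 = 1$, I update $M_i = \sum_{a \in \{0,1\}} A_i(a)^\top M_{i-1} B_i(a)$, where $M_{i-1}$ has size $r_i \times q_i$ and $M_i$ has size $r_{i+1} \times q_{i+1}$. Unfolding the recursion shows $M_k = \sum_{a_1,\ldots,a_k} (A_1(a_1)\cdots A_k(a_k))^\top (B_1(a_1)\cdots B_k(a_k))$, which, both factors being scalars, equals $\sum_{a} \mathcal{A}^\prime(a)\mathcal{B}^\prime(a) = \mathcal{A}^\prime \cdot \mathcal{B}^\prime$. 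Computing $M_{i-1}B_i(a)$ and then left-multiplying by $A_i(a)^\top$ costs $\mathcal{O}(r_i q_i q_{i+1} + r_i r_{i+1} q_{i+1}) = \mathcal{O}(rq(r+q))$ per mode, so the sweep runs in $\mathcal{O}((m+\ell)rq(r+q))$ time.

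I expect the main obstacle to be the correctness verification rather than the complexity counting: specifically, confirming that the block-diagonal structure (for the sum) and the Kronecker structure (for the Hadamard product) genuinely survive all the intermediate slice products down to the scalar endpoints, and that the inner-product sweep correctly realizes the full summation over all $2^k$ assignments as a chain of $k$ local contractions. Once these telescoping identities are checked, the size and time bounds follow immediately from $k \le m+\ell$ together with the rank bounds $r+q$, $rq$, and the unchanged ranks $r, q$ supplied by Lemma~\ref{lem:align-tt}.
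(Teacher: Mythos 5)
Your proposal is correct and follows essentially the same route as the paper: block-diagonal cores for the sum, Kronecker products of slices for the Hadamard product, and a left-to-right contraction sweep for the inner product (which is exactly the standard procedure the paper invokes by citation to avoid forming the Kronecker-product cores explicitly). The only difference is that you spell out the telescoping identities and the sweep recursion that the paper delegates to \cite{oseledets2011tensor}, and your rank and cost bounds match the stated complexities.
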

\begin{proof}
  Following~\cite{oseledets2011tensor}, the sum $\mathcal{C}$ of $\mathcal{A}^\prime$ and $\mathcal{B}^\prime$ can be represented as $\mathcal{C}=\mathcal{C}_1\cdots\mathcal{C}_k$, where $\mathcal{C}_1=\begin{bmatrix}\mathcal{A}^\prime_1 & \mathcal{B}^\prime_1\end{bmatrix}$, $\mathcal{C}_k=\begin{bmatrix}\mathcal{A}^\prime_1 \\ \mathcal{B}^\prime_1\end{bmatrix}$ and $\mathcal{C}_i(\cdot,j,\cdot)=\begin{bmatrix}\mathcal{A}^\prime_i(\cdot,j,\cdot) & O \\ O & \mathcal{B}^\prime_i(\cdot,j,\cdot)\end{bmatrix}$ for $1< i< k$.
  The tensor train rank of $\mathcal{C}$ is at most $r+q$.
  Thus, the time complexity is bounded by $\mathcal{O}(k(r+q)^2)=\mathcal{O}((m+\ell)(r+q)^2)$.
  The Hadamard product $\mathcal{D}$ of $\mathcal{A}^\prime$ and $\mathcal{B}^\prime$ is $\mathcal{D}=\mathcal{D}_1\cdots\mathcal{D}_k$, where $\mathcal{D}_i(\cdot,j,\cdot)=\mathcal{A}^\prime_i(\cdot,j,\cdot)\otimes \mathcal{B}^\prime_i(\cdot,j,\cdot)$ for $1\leq i\leq k$, which also follows~\cite{oseledets2011tensor}.
  Here, $\otimes$ is the Kronecker product of matrices.
  The tensor train rank of $\mathcal{D}$ is at most $rq$.
  Thus, the complexity is bounded by $\mathcal{O}(k(rq)^2)=\mathcal{O}((m+\ell)r^2q^2)$.

  For the inner product, we have to compute the sum of the elements of Hadamard product $\mathcal{D}$.
  Although $\mathcal{D}$ has $2^k$ elements, 
  the summation can be distributed over every mode; that is, we have $\sum_{a_1,\ldots,a_k}\mathcal{D}(a_1,\cdots,a_k)=(\sum_{a_1}\mathcal{D}_1(\cdot,a_1,\cdot))\cdots(\sum_{a_k}\mathcal{D}_k(\cdot,a_k,\cdot))$.
  Thus, we do not need a summation over several values, yielding the polytime complexity.
  Moreover, following the procedure of~\cite{oseledets2011tensor} to obtain the inner product on tensor trains, we do not need to explicitly compute the individual Kronecker product $\mathcal{D}_i(\cdot,j,\cdot)=\mathcal{A}^\prime_i(\cdot,j,\cdot)\otimes \mathcal{B}^\prime_i(\cdot,j,\cdot)$.
  The complexity is bounded by $\mathcal{O}(krq(r+q))=\mathcal{O}((m+\ell)rq(r+q))$.
\end{proof}

Next, we show that any term or clause can be represented as a tensor train of constant rank.
A term $\gamma=\bigwedge_{j=1}^{\ell}l_j$ is a conjunction of $\ell$ literals, where each literal $l_j$ is either $x_t$ or $\neg x_t$ for some $1\leq t\leq n$ and no two different literals share the same variable.
Similarly, a clause $\gamma^\prime=\bigvee_{j=1}^{\ell}l_j$ is a disjunction of $\ell$ literals.
\begin{lem}
  \label{lem:simple-tt}
  For any order $<$ among variables, we can construct a tensor train representation of a term whose number of modes is $\ell$ and rank is $1$.
  Additionally, we can construct a tensor train representation of a clause whose number of modes is $\ell$ and rank is $2$.
\end{lem}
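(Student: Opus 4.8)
The plan is to treat the term and the clause separately, exploiting that both depend only on the $\ell$ variables actually occurring in them. In each case I set the number of modes to $\ell$ and let $\pi:[\ell]\to[n]$ send $j$ to the $j$-th smallest (under $<$) variable appearing in $\gamma$, so that $\pi$ automatically follows the given order. The whole task then reduces to exhibiting cores $\mathcal{A}_1,\dots,\mathcal{A}_\ell$ whose product reproduces the required $2^{\times\ell}$ tensor with the claimed boundary/internal ranks and ternary entries.

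For the term $\gamma=\bigwedge_{j=1}^\ell l_j$, I would observe that its tensor factorizes completely: $\mathcal{T}(a_1,\dots,a_\ell)=\prod_{j=1}^\ell t_j(a_j)$, where $t_j(a_j)=1$ exactly when $a_j$ matches the polarity of the literal on $x_{\pi(j)}$ and $t_j(a_j)=0$ otherwise. Hence I take each $\mathcal{A}_j$ to be the $1\times 2\times 1$ core with $\mathcal{A}_j(0,a_j,0)=t_j(a_j)$. All internal ranks are $r_i=1$, giving rank $1$ with $\ell$ modes and entries in $\{0,1\}$; this part is essentially immediate.

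For the clause $\gamma'=\bigvee_{j=1}^\ell l_j$, I would carry a two-dimensional state along the chain. Writing $c_j(a_j)=1-t_j(a_j)$ for the indicator that $l_j$ is \emph{falsified} by $a_j$, and $\mathrm{OR}_j=l_1\vee\dots\vee l_j$, I track the row vector $P_j=(\neg\mathrm{OR}_j,\,1)$ via the recurrence $\neg\mathrm{OR}_j=c_j(a_j)\,\neg\mathrm{OR}_{j-1}$. This is realized by the diagonal internal cores $\mathcal{A}_i(\cdot,a_i,\cdot)=\mathrm{diag}(c_i(a_i),\,1)$, the initial row $\mathcal{A}_1(\cdot,a_1,\cdot)=(c_1(a_1),\,1)$, and the final column $\mathcal{A}_\ell(\cdot,a_\ell,\cdot)=(-c_\ell(a_\ell),\,1)^\top$, which reads out $\mathrm{OR}_\ell=1-\neg\mathrm{OR}_\ell$. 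A short induction on $j$ confirms $P_j=(\neg\mathrm{OR}_j,\,1)$, whence the full product equals $1-\prod_{j}c_j(a_j)$, i.e.\ the clause. The internal ranks are $2$, the number of modes is $\ell$, and all entries lie in $\{-1,0,1\}$. As an alternative I could invoke De Morgan: $\gamma'$ is the negation of the term $\bigwedge_j\neg l_j$, and since negation is $\mathbf{1}-\mathcal{T}$ with the all-ones tensor of rank $1$, the clause is a sum of two rank-$1$ tensor trains and so has rank at most $2$.

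The substance here is light; the only delicate points are bookkeeping. I must ensure the boundary ranks $r_1=r_{\ell+1}=1$ are respected (handled by making the first and last cores a row and a column vector), that the ternary constraint holds (the sole negative entry is $-c_\ell(a_\ell)\in\{-1,0\}$), and that the readout yields a genuine $0/1$ output. The edge case $\ell=1$ (a single literal) collapses to the rank-$1$ term construction, so the clause bound should be read as ``at most $2$,'' attained for $\ell\ge 2$. The main obstacle, such as it is, is verifying correctness of the clause construction, which I would settle by the stated induction on $j$ showing the partial product equals $(\neg\mathrm{OR}_j,\,1)$.
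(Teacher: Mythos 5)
Your term construction is exactly the paper's: the same mapping $\pi$ onto the $\ell$ occurring variables and the same $1\times 2\times 1$ indicator cores, giving rank $1$. For the clause, your primary argument differs from the paper's. The paper goes through De Morgan plus its general machinery: it represents $\bigwedge_j \neg l_j$ as a rank-$1$ term, negates via $\mathbf{1}+(-\mathcal{A})$ (flipping the sign of the first core), and invokes the tensor-train sum construction of Lemma~\ref{lem:basic-operations-tt} to conclude the result has rank at most $2$ with entries in $\{-1,0,1\}$ --- which is precisely the ``alternative'' you mention in passing. Your main route instead builds the rank-$2$ train explicitly as a transfer-matrix recursion carrying the state $(\neg\mathrm{OR}_j,\,1)$ with diagonal internal cores and a signed readout column; your induction and the entry/rank bookkeeping check out (the only negative entry is $-c_\ell(a_\ell)$, boundary ranks are $1$, and the $\ell=1$ edge case degenerates to the term construction). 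The trade-off: your explicit construction is self-contained and makes the rank-$2$ structure of a clause transparent without appealing to the sum lemma, whereas the paper's version is shorter here because it reuses the negation and sum operations it must establish anyway for $\neg$C and $\vee$BC, and the resulting block structure from the sum lemma is what the later complexity accounting refers to. Both are correct and yield the same parameters.
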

\begin{proof}
  Let $V(\gamma)$ be the set of variables appearing in the term $\gamma$ and let $\pi:[\ell]\rightarrow[n]$ be an injection where $\pi(i)=j$ if and only if $x_j$ is the $i$-th smallest variable among $V(\gamma)$ according to the order $<$.
  We define $\mathcal{A}_i$ as a $1\times 2\times 1$ tensor that equals $(0, 1)$ when $x_{\pi(i)}$ appears in $\gamma$ and $(1, 0)$ if $\neg x_{\pi(i)}$ appears in $\gamma$.
  Then, $((\mathcal{A}_1,\ldots,\mathcal{A}_\ell),\pi)$ is a rank-1 tensor train representation of $\gamma$.

  For clauses, we construct mapping $\pi$ in the same manner as above.
  According to De Morgan's law, we have $\bigvee_{j=1}^{\ell}l_j=\neg(\bigwedge_{j=1}^{\ell}\neg l_j)$, where $\bigwedge_{j=1}^\ell\neg l_j$ can be represented by rank-1 tensor train $\mathcal{A}$ because it is a term.
  Within tensor representation, negation can be performed by $\mathbf{1} + (-\mathcal{A})$, where $\mathbf{1}$ is a $2^{\times \ell}$ tensor whose elements are all $1$.
  $\mathbf{1}$ can be represented as a rank-1 tensor train $\mathcal{B}_1\cdots \mathcal{B}_\ell$, where $\mathcal{B}_i$ is a $1 \times 2 \times 1$ tensor $(1, 1)$.
  Additionally, to obtain $-\mathcal{A}$ from $\mathcal{A}$, we multiply $\mathcal{A}_1$ by $-1$.
  Finally, by Lemma~\ref{lem:basic-operations-tt}, the sum of $-\mathcal{A}$ and $\mathbf{1}$ is a rank-2 tensor train with $\ell$ modes.
  Note that although the multiplication with $-1$ possibly generates a $(-1)$-valued element, it still satisfies that every element is in $\{-1,0,1\}$.
\end{proof}
Finally, we demonstrate that each operation can be performed in polytime.
In the following, we assume that Boolean functions $f$ and $g$ are represented by the tensor train representations $(\mathcal{A},\pi)$ and $(\mathcal{B},\rho)$, where the ranks of $\mathcal{A}$ and $\mathcal{B}$ are $r$ and $q$, and the number of modes of each tensor is $m$ and $\ell$, respectively.
Note that, for CE, IM, and CD, $\ell$ denotes the number of literals in the term or clause.

\subsubsection{CT:}
CT corresponds to counting the number of 1-elements of $2^{\times m}$ tensor $\mathcal{A}$ and then multiplying it with $2^{n-m}$.
The former operation can be performed by taking the inner product  $\mathbf{1}\cdot\mathcal{A}$, where $\mathbf{1}$ is a $2^{\times m}$ tensor whose elements are all $1$.
As in the proof of Lemma~\ref{lem:simple-tt}, $\mathbf{1}$ can be represented with a rank-1 tensor train with $\ell=m$ modes.
By substituting the complexity of the inner product in Lemma~\ref{lem:basic-operations-tt} with $\ell=m$ and $q=1$, CT can be answered in $\mathcal{O}(mr^2)$ time.

Moreover, we can perform weighted model counting where the weight of a model is represented by the product of 
weights of each variable as the inner product $\mathcal{W} \cdot \mathcal{A}$, where $\mathcal{W} = \mathcal{W}_1 \cdots \mathcal{W}_m$ is a rank-1 real-valued tensor train.
Each $\mathcal{W}_i$ is a $1\times 2 \times 1$ tensor having $(\bar{w}_{\pi(i)}, w_{\pi(i)})$, where $\bar{w}_j, w_j$ are negative and positive weights corresponding to variable $x_j$.
Note that the procedure for obtaining the inner product in Lemma~\ref{lem:basic-operations-tt} is also valid for real-valued tensor trains.
The final count value can be obtained by multiplying it with $\prod_{i:\nexists j: \pi(j)=i}(\bar{w}_{i}+w_{i})$.

\subsubsection{CO, VA, CE, EQ, SE, and IM:}
These operations can be performed using CT: 

\begin{itemize}
\item CO can be determined by whether the result of CT does not equal $0$. 
\item VA can be determined by whether the result of CT equals $2^n$.
\item EQ can be determined as follows:
We first align the mapping by Lemma~\ref{lem:align-tt} to obtain $\mathcal{A}^\prime$ and $\mathcal{B}^\prime$.
It can then be determined by whether $\mathcal{A}^\prime\cdot\mathcal{B}^\prime$, $\mathbf{1}\cdot\mathcal{A}^\prime$, and $\mathbf{1}\cdot\mathcal{B}^\prime$ are all equal.
\item SE can be determined by first aligning the mappings by Lemma~\ref{lem:align-tt} and then checking whether $\mathcal{A}^\prime\cdot\mathcal{B}^\prime$ equals $\mathbf{1}\cdot\mathcal{A}^\prime$.
\end{itemize}
We can also answer CE and IM in the same way as SE because any term and clause can be represented by a constant-rank tensor train with $\ell$ modes by Lemma~\ref{lem:simple-tt}.
The complexities of CO and VA are identical to that of $\mathcal{O}(mr^2)$ in CT.
The complexities of EQ and SE are identical to that of $\mathcal{O}((m+\ell)rq(r+q))$ in the inner product.
The complexity of CE and IM is $\mathcal{O}((m+\ell)r^2)$, where $\ell$ denotes the number of literals in the term or clause.
Note that, before performing EQ, SE, CE and IM, we need to align the variables using Lemma~\ref{lem:align-tt}, but the rank does not change before and after the alignment, so we can use the rank before the alignment for the computational complexity.

\subsubsection{$\wedge$BC:}
$\wedge$BC corresponds to the Hadamard product $\mathcal{A}^\prime\circ\mathcal{B}^\prime$ of aligned tensors (by Lemma~\ref{lem:align-tt}) because, for binary values $a$ and $b$, $ab=\min\{a, b\}=a\wedge b$.
Therefore, the complexity equals $\mathcal{O}((m+\ell)r^2q^2)$ by Lemma~\ref{lem:basic-operations-tt}.

\subsubsection{CD:}
For a given tensor train $\mathcal{A}$ and a term $\gamma$ conprised of $\ell$-literals, CD is the transformation constructing a tensor train corresponding to $f|\gamma$.
Following a previous study on SDD~\cite{darwiche2011sdd}, we assume that the input binary variables remain unchanged by CD.
CD can be performed as follows:
The function $f|x_i$ satisfies $(f|x_i)(a_1,\ldots,a_{j-1},0,a_{j+1},\ldots,a_n)=(f|x_i)(a_1,\ldots,a_{j-1},1,a_{j+1},\ldots,a_n)=f(a_1,\ldots,a_{j-1},1,a_{j+1},\ldots,a_n)$ for any assignments $a_k\in\{0,1\}$ $(k=1,\ldots,j-1,j+1,\ldots,n)$.
Thus, the tensor train representation of $f|x_i$ can be constructed by copying $\mathcal{A}_{\pi^{-1}(j)}(\cdot,1,\cdot)$ to $\mathcal{A}_{\pi^{-1}(j)}(\cdot,0,\cdot)$ if $\pi^{-1}(j)$ exists.
Similarly, the tensor train representation of $f|\neg x_i$ can be constructed by copying $\mathcal{A}_{\pi^{-1}(i)}(\cdot,0,\cdot)$ to $\mathcal{A}_{\pi^{-1}(i)}(\cdot,1,\cdot)$ if $\pi^{-1}(j)$ exists.
The tensor train representation of $f|\gamma$ for a term $\gamma$ can be constructed by repeating the above operation for the literals in $\gamma$.
The time complexity is $\mathcal{O}(\ell r^2)$ because the operation of copying $\mathcal{O}(r^2)$ elements is performed $\ell$ times.

\subsubsection{ME:}
If a representation supports CO and CD in polytime, then we can perform ME in polytime~\cite{knowledge}.
Briefly, we enumerate the solutions by creating something resembling a trie-tree, using the divide-and-conquer method.
First, we check whether a model exists for $f$ using CO.
If no model is available, the procedure is completed.
Otherwise, we create two Boolean functions $f \wedge x_{\pi(1)}$ and $f \wedge \neg x_{\pi(1)}$, both of which can be obtained by CD.
We then verify the consistency of both Boolean functions.
If the results are inconsistent, we stop the procedure.
Otherwise, we recursively repeat the process for all $x_{\pi(i)}$.
As CO and CD can be performed in polytime, these substitutions and the existence of models can also be performed in polytime, and the overall computational complexity can be reduced in the tensor train size and the number of models. 
As CO and CD are repeated $\mathcal{O}(M)$ times, the computational complexity is $\mathcal{O}(Mmr^2)$, where $M$ denotes the number of models.

\subsubsection{$\lnot$C:}
As explained in the proof of Lemma~\ref{lem:simple-tt}, $\lnot$C can be calculated by $\mathbf{1}+(-\mathcal{A})$, where $-\mathcal{A}$ is obtained by multiplying the leftmost tensor $\mathcal{A}_1$ in the tensor train by $-1$.
Hereafter, we write $\mathbf{1}+(-\mathcal{A})$ as $\mathbf{1}-\mathcal{A}$.
Because $\mathbf{1}$ can be represented by a tensor train whose number of modes is $l=m$ and rank is $q=1$, the complexity is $\mathcal{O}(mr^2)$, which is derived from the summation in Lemma~\ref{lem:basic-operations-tt}.
The elements of the output tensor train comprise the elements of the original tensor multiplied by $-1$ and arranged in disjoint positions of $1$; thus, it is closed under $\{-1,0,1\}$.

\subsubsection{$\vee$BC:}
$\vee$BC can be calculated by aligning the modes by Lemma~\ref{lem:align-tt} and then computing $\mathbf{1}-((\mathbf{1}-\mathcal{A}^\prime)\circ (\mathbf{1}-\mathcal{B}^\prime))$ because $f \vee g = \lnot\{(\lnot f)\wedge (\lnot g)\}$ by De Morgan's law.
The computational complexity is 
$\mathcal{O}((m+\ell)r^2q^2)$
because the most time-consuming part is taking the logical conjunction, which costs 
$\mathcal{O}((m+\ell)r^2q^2)$. 
Note that, even after taking $\vee$BC, every element in the tensor train representation is closed under $\{-1,0,1\}$ because $\wedge$BC and $\neg$C is also closed under $\{-1,0,1\}$.

\subsubsection{SFO:}
Similar to CD, we assume that the variable set is unchanged by SFO.
SFO can be performed using CD and $\vee$BC.
We only need to compute $(f|x)\vee(f|\neg x)$. The ranks of $f|x$ and $f|\neg x$ are both $r$, and, finally, the rank becomes $\mathcal{O}(r^2)$ by disjunction.
Thus, the computational complexity is $\mathcal{O}(mr^4)$.

\end{document}